\theoremstyle{plain} \newtheorem{fact}[thm]{Fact}
\theoremstyle{definition} \newtheorem{defn}[thm]{Definition}
\newcommand{\scrA}{\mathscr{A}}
\newcommand{\cA}{\mathcal{A}}
\newcommand{\scrB}{\mathscr{B}}
\newcommand{\cB}{\mathcal{B}}
\renewcommand{\b}{\{0,1\}}
\newcommand{\eps}{\varepsilon}
\newcommand{\ind}[1]{\mathbf{1}_{#1}}
\newcommand{\nn}{[n]^{[n]}}
\DeclareMathOperator{\Range}{Range}
\newcommand{\cS}{\mathcal{S}}
\DeclareMathSymbol{\R}{\mathbin}{AMSb}{"52}
\DeclareMathSymbol{\Z}{\mathbin}{AMSb}{"5A}
\newcommand{\accz}{\mathsf{ACC}^0}
\newcommand{\ceq}{\subseteq}
\DeclareMathOperator{\cost}{cost}
\newcommand{\idx}{\textsc{index}\xspace}
\newcommand{\mpj}{\textsc{mpj}\xspace}
\newcommand{\mpjh}{\widehat{\textsc{mpj}}\xspace}
\newcommand{\mpjperm}{{\textsc{mpj}}^{\rm perm}}
\newcommand{\mpjhperm}{\widehat{\textsc{mpj}}^{\rm perm}}
\newcommand{\plr}{\textsc{plr}\xspace}
\newcommand{\tpj}{\textsc{tpj}\xspace}
\newcommand{\ang}[1]{\langle #1 \rangle}
\newcommand{\ceil}[1]{\lceil #1 \rceil}
\newcommand{\comments}[1]{}
\begin{document}

\title[Sublinear Protocols for Pointer Jumping]{%
  Sublinear Communication Protocols for Multi-Party Pointer Jumping
  and a Related Lower Bound
}

\author[aaa]{J.~Brody}{Joshua Brody}
\address[aaa]{%
  Department of Computer Science \newline
  Dartmouth College \newline
  Hanover, NH 03755, USA}
\author[aaa]{A.~Chakrabarti}{Amit Chakrabarti}

\thanks{%
  Work supported in part by an NSF CAREER Award CCF-0448277, NSF
  grants CCF-0514870 and EIA-98-02068. Work partly done while 
  the authors were visiting the University of Washington, Seattle, WA.
}

\keywords{Communication complexity, pointer jumping, number on the
forehead}

\subjclass{F.1.3, F.2.2}

%%%%%%%%%%%%%%%%%%%%%%%%%%%%%%%%%%%%%%%%%%%%%%%%%%%%%%%%%%%%%%%%%%%%%%%%%%%%%%%

\begin{abstract}

We study the one-way number-on-the-forehead (NOF) communication
complexity of the $k$-layer pointer jumping problem with $n$ vertices
per layer. This classic problem, which has connections to many aspects
of complexity theory, has seen a recent burst of research activity,
seemingly preparing the ground for an $\Omega(n)$ lower bound, for
constant $k$.  Our first result is a surprising sublinear --- i.e.,
$o(n)$ --- upper bound for the problem that holds for $k \ge 3$, dashing
hopes for such a lower bound.

A closer look at the protocol achieving the upper bound shows that all
but one of the players involved are {\em collapsing}, i.e., their
messages depend only on the composition of the layers ahead of them. We
consider protocols for the pointer jumping problem where {\em all}
players are collapsing. Our second result shows that a strong $n -
O(\log n)$ lower bound does hold in this case.  Our third result is
another upper bound showing that nontrivial protocols for (a non-Boolean
version of) pointer jumping are possible even when all players are
collapsing.

Our lower bound result uses a novel proof technique, different from
those of earlier lower bounds that had an information-theoretic flavor.
We hope this is useful in further study of the problem.

\end{abstract}

\maketitle

\stacsheading{2008}{145-156}{Bordeaux}
\firstpageno{145}

% % %

% --------------------> SECTION <--------------------

\section{Introduction}

Multi-party communication complexity in general, and the {\em pointer
jumping} problem (also known as the {\em pointer chasing} problem) in
particular, has been the subject of plenty of recent research. This is
because the model, and sometimes the specific problem, bears on several
aspects of computational complexity: among them, circuit
complexity~\cite{Yao90,HastadG91,BeigelT94}, proof size lower
bounds~\cite{BeamePS05} and space lower bounds for streaming
algorithms~\cite{AlonMS99,GuhaM07,ChakrabartiJP08}. The most impressive
known consequence of a strong multi-party communication lower bound
would be to exhibit non-membership in the complexity class $\accz$;
details can be found in Beigel and Tarui~\cite{BeigelT94} or in the
textbook by Arora and Barak~\cite{AroraBarak-book}.  Vexingly, it is not
even known whether or not $\accz = \mathsf{NEXP}$.

The setting of multi-party communication is as follows. There are $k$
players (for some $k \ge 2$), whom we shall call $\plr_1, \plr_2,
\ldots, \plr_k$, who share an input $k$-tuple $(x_1, x_2, \ldots, x_k)$.
The goal of the players is to compute some function $f(x_1, x_2, \ldots,
x_k)$.  There are two well-studied sharing models: the {\em
number-in-hand} model, where $\plr_i$ sees $x_i$, and the {\em
number-on-the-forehead} (NOF) model, where $\plr_i$ sees all $x_j$s such
that $j \ne i$.  Our focus in this paper will be on the latter model,
which was first introduced by Chandra, Furst and
Lipton~\cite{ChandraFL83}. It is in this model that communication lower
bounds imply lower bounds against $\accz$. We shall use $C(f)$ to denote
the deterministic communication complexity of $f$ in this model. Also of
interest are randomized protocols that only compute $f(x)$ correctly
with high probability: we let $R_\eps(f)$ denote the $\eps$-error
randomized communication complexity of $f$. Our work here will stick to
deterministic protocols, which is a strength for our upper bounds.
Moreover, it is not a serious weakness for our lower bound, because the
$\accz$ connection only calls for a deterministic lower bound.

Notice that the NOF model has a feature not seen elsewhere in
communication complexity: the players {\em share} plenty of information.
In fact, for large $k$, each individual player already has ``almost''
all of the input. This intuitively makes lower bounds especially hard to
prove and indeed, to this day, no nontrivial lower bound is known in the
NOF model for any explicit function with $k = \omega(\log n)$ players,
where $n$ is the total input size. The pointer jumping problem is widely
considered to be a good candidate for such a lower bound. As noted by
Damm, Jukna and Sgall~\cite{DammJS98}, it has many natural special
cases, such as shifting, addressing, multiplication and convolution.
This motivates our study.

% ---> subsection <---

\subsection{The Pointer Jumping Problem and Previous Results}
\label{sec:prev}

There are a number of variants of the pointer jumping problem. Here we
study two variants: a Boolean problem, $\mpj^n_k$, and a non-Boolean
problem, $\mpjh^n_k$ (henceforth, we shall drop the superscript $n$).
In both variants, the input is a subgraph of a fixed layered graph that
has $k+1$ layers of vertices, with layer $0$ consisting of a single
vertex, $v_0$, and layers $1$ through $k-1$ consisting of $n$ vertices
each (we assume $k \ge 2$). Layer $k$ consists of $2$ vertices in the
case of $\mpj_k$ and $n$ vertices in the case of $\mpjh_k$. The input
graph is a subgraph of the fixed layered graph in which every vertex
(except those in layer $k$) has outdegree $1$. The desired output is the
name of the unique vertex in layer $k$ reachable from $v_0$, i.e., the
final result of ``following the pointers'' starting at $v_0$.  The
output is therefore a single bit in the case of $\mpj_k$ or a
$\ceil{\log n}$-bit string in the case of $\mpjh_k$.\footnote{Throughout
this paper we use ``$\log$'' to denote logarithm to the base $2$.}

The functions $\mpj_k$ and $\mpjh_k$ are made into NOF communication
problems as follows: for each $i \in [k]$, a description of the $i$th
layer of edges (i.e., the edges pointing into the $i$th layer of
vertices) is written on $\plr_i$'s forehead. In other words, $\plr_i$
sees every layer of edges except the $i$th. The players are allowed to
write one message each on a public {\em blackboard} and must do so in
the fixed order $\plr_1, \plr_2, \ldots, \plr_k$. The final player's
message must be the desired output.  Notice that the specific order of
speaking --- $\plr_1, \plr_2, \ldots, \plr_k$ --- is important to make
the problem nontrivial. Any other order of speaking allows an easy
deterministic protocol with only $O(\log n)$ communication.

Consider the case $k = 2$. The problem $\mpj_2$ is equivalent to the
two-party communication problem $\idx$, where Alice holds a bit-vector
$x\in\b^n$, Bob holds an index $i\in[n]$, and Alice must send Bob a
message that enables him to output $x_i$. It is easy to show that
$C(\mpj_2) = n$. In fact, Ablayev~\cite{Ablayev96} shows the tight
tradeoff $R_\eps(\mpj_2) = (1 - H(\eps))n$, where $H$ is the binary
entropy function.  It is tempting to conjecture that this lower bound
generalizes as follows.

\begin{conj} \label{conj:mpj-lin-lb}
  There is a nondecreasing function $\xi:\Z^+\to\R^+$ such that,
  $\forall\,k:~ C(\mpj_k) = \Omega(n/\xi(k))$.
\end{conj}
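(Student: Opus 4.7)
The natural starting point is the base case $k=2$, for which $\mpj_2$ is equivalent to $\idx$ and hence $C(\mpj_2) = n$. To lift this to all $k$, I would pursue an inductive plan: assume the conjecture for $k-1$ and reduce an instance of $\mpj_{k-1}$ on $n$ vertices per layer to an instance of $\mpj_k$ at a cost of only a constant-factor blow-up. Iterating would yield $C(\mpj_k) = \Omega(n/\alpha^k)$ for some absolute constant $\alpha > 1$, realizing the conjecture with $\xi(k) = \alpha^k$.

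The most promising machinery is a round-elimination / information-complexity argument. Fix a hard distribution $\mu$ on $\mpj_k$ inputs --- the obvious candidate has the edges in each layer chosen independently and uniformly. Consider any deterministic protocol with total communication $c$. Since $\plr_1$ speaks first, her message $M_1$ depends only on layers $2, \ldots, k$. One would look for a ``popular'' value $m$ of $M_1$ such that conditioning on $M_1 = m$ leaves a large set of inputs inside which a genuine $\mpj_{k-1}$ instance on roughly $n' \ge n / 2^{O(c/n)}$ vertices per layer is embedded. The inductive hypothesis would then force the remaining players to communicate $\Omega(n'/\xi(k-1))$ bits, yielding $c = \Omega(n/\xi(k))$ for an appropriate $\xi$.

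The step I expect to be the main obstacle is the ``find $\mpj_{k-1}$ inside the conditioned rectangle'' claim. In number-in-hand models this would be routine --- one freezes $\plr_1$'s input and reads off the residual instance. In NOF, however, $M_1$ depends simultaneously on all of layers $2$ through $k$, so conditioning on $M_1 = m$ correlates those layers in ways that $\plr_2, \ldots, \plr_k$ could in principle exploit. A successful proof would need a structural lemma stating that on a large subrectangle the joint distribution of layers $2, \ldots, k$ is close to product (in statistical distance, or at least in some average-case sense), so that the residual problem genuinely looks like an independent $\mpj_{k-1}$ instance under $\mu$. This is precisely the step at which earlier information-theoretic attacks on pointer jumping have bogged down, being forced into myopic or conservative restrictions. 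Given the sublinear upper bound announced in the abstract, I anticipate that this plan is not merely technically difficult but fundamentally doomed: the authors' protocol presumably exploits exactly the cross-layer correlations that any such structural lemma wishes to destroy, and identifying the precise place at which the plan breaks is probably the natural route to discovering the surprising sublinear protocol itself.
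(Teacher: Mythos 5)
You were asked to prove a statement that the paper does not prove --- it \emph{disproves} it. Conjecture~\ref{conj:mpj-lin-lb} is stated only as the natural extrapolation of the $k=2$ case and of lower bounds in the weaker conservative and myopic models; Theorem~\ref{thm:mpj-ub} then explicitly falsifies it by exhibiting, for every $k \ge 3$, a deterministic one-way NOF protocol of cost
\[
  C(\mpj_k) = O\!\left( n \left(\frac{k\log\log n}{\log n}\right)^{(k-2)/(k-1)} \right) = o(n),
\]
so no function $\xi$ depending only on $k$ can make $C(\mpj_k) = \Omega(n/\xi(k))$ hold. Your proposal is therefore not a proof, but your \emph{diagnosis} of why the inductive/round-elimination plan must fail is essentially correct, and you reached the right verdict (``fundamentally doomed'') for the right reason. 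The obstruction you single out --- that in the NOF model $\plr_1$'s message depends jointly on layers $2,\ldots,k$, so that conditioning on it introduces cross-layer correlations that cannot be purged to recover an independent $\mpj_{k-1}$ instance --- is precisely the feature the authors' protocol exploits. Concretely, $\plr_1$ (who sees both the middle function $f$ and the Boolean string $x$) replaces $f$ by a small set $\cA_d(f)$ of permutations that ``$d$-cover'' $f$ (Lemma~\ref{lem:perm-cover}), sends the Pudlak--R\"odl--Sgall $\mpjperm_3$ messages $\alpha(\pi,x)$ for each $\pi\in\cA_d(f)$, and additionally leaks the few bits $x_s$ at heavy fibers of $f$; $\plr_3$ can then read off the answer. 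The message thus entangles $f$ and $x$ in exactly the way a direct-sum or rectangle-conditioning argument would need to rule out, which is also why, as the concluding remarks note, information-theoretic approaches that would naturally give bounds of the form $\Omega(n/\xi(k))$ cannot succeed here. Where you were slightly off is only in the framing: you treated the task as ``prove or locate the gap in a proof,'' whereas the correct resolution is that the statement is simply false, and the paper supplies the counterexample protocol.
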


Note that, by the results of Beigel and Tarui~\cite{BeigelT94}, in order
to show that $\mpj_k \notin \accz$ it would suffice, for instance, to
prove the following (possibly weaker) conjecture.

\begin{conj} \label{conj:mpj-accz}
  There exist constants $\alpha, \beta > 0$ such that, for $k
  = n^\alpha$, $C(\mpj_k) = \Omega(n^\beta)$.
\end{conj}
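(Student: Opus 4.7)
The plan is necessarily speculative: Conjecture~\ref{conj:mpj-accz} is flagged as open, and its resolution would imply $\mpj_k \notin \accz$ via Beigel--Tarui, so no routine argument can work. My strategy is to attempt to lift the $n - O(\log n)$ collapsing lower bound announced later in the paper to the unrestricted NOF setting, and then instantiate with $k = n^{\alpha}$ for an appropriately small constant $\alpha$ so that the resulting bound survives as $\Omega(n^{\beta})$.

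Concretely, I would proceed in three stages. \emph{First}, given a general $c$-bit protocol $\Pi$ for $\mpj_k$, I would try to show that each player's message can be replaced by one depending only on the \emph{composition} of the forward layers, at an additive cost of $O(\log n)$ bits per player and a small error probability---perhaps via a random-restriction argument in which one fixes most coordinates on each forehead uniformly and argues that, on a typical restriction, each player's surviving message is essentially collapsing. \emph{Second}, apply the forthcoming collapsing lower bound to the resulting protocol; if the total overhead is $\poly(k)$, this yields $c = \Omega(n/\poly(k))$, which for $\alpha$ small enough meets the conjectured $\Omega(n^{\beta})$ bound. \emph{Third}, if this route only yields a lower bound for the non-Boolean variant $\mpjh_k$, bridge to $\mpj_k$ by a direct-sum or hardness-amplification step relating a single output bit to the full terminal vertex.

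The hard part, and almost certainly the place any attempt would stall, is the first stage: being collapsing is an extremely restrictive structural property, and there is a priori no reason a general NOF protocol should admit a collapsing approximation. Any such general reduction would immediately translate into strong NOF lower bounds for many explicit functions, colliding with the long-standing ``$\log n$ players'' barrier. When $k = n^{\alpha}$, every $\plr_i$ already sees a $1 - 1/n^{\alpha}$ fraction of the input, and no current technique rules out her using this near-total knowledge to shortcut the pointer chase in polylogarithmic communication. Real progress would seem to demand new discrepancy or cylinder-intersection bounds valid for $k = n^{\alpha}$ players---precisely the missing ingredient in every existing NOF lower bound above $\log n$ players.
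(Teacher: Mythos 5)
This statement is labeled a conjecture precisely because the paper does not prove it --- indeed the authors list resolving it as ``the most outstanding open problem regarding $\mpj_k$'' in their concluding remarks, and a proof would immediately yield $\mpj_k \notin \accz$ via Beigel--Tarui. So there is no proof in the paper to compare against, and you are right to treat your sketch as speculative rather than as a proof.

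Your self-diagnosis of where the plan breaks is accurate, and worth being explicit about. The first stage --- converting a general NOF protocol into a collapsing one at modest cost --- cannot work as stated, and the paper's own Theorem~\ref{thm:mpj-ub} is a concrete obstruction: it exhibits a sublinear $o(n)$ unrestricted protocol for $\mpj_k$ ($k \ge 3$, constant), whereas Theorem~\ref{thm:collapsing-lb} shows every collapsing protocol must pay $n - O(\log n)$. Thus any ``collapsing simulation'' of a general protocol must blow up the cost by at least a factor of roughly $\log n / (k \log\log n)$ already at $k = 3$, and there is no reason to expect the overhead to stay polynomial in $k$ once $k$ grows. The collapsing model deliberately strips $\plr_1$ of the global view that the sublinear protocol exploits; that asymmetry is the whole point of the lower bound, not an incidental feature one can restore by a random restriction. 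Your third stage also understates a difficulty: $\mpjh_k$ is the \emph{easier} direction to lower-bound (it has a $\log n$-bit output), so a bound there does not transfer to the Boolean $\mpj_k$ without a genuine hardness-amplification argument, which is itself open in this model. In short, the proposal correctly identifies the conjecture as open and correctly predicts its own failure mode; there is no gap in your reasoning about the difficulty, only (unavoidably) no proof.
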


Conjecture~\ref{conj:mpj-lin-lb} is consistent with (and to an extent
motivated by) research prior to this work. In weaker models of
information sharing than the NOF model, an equivalent statement is known
to be true, even for randomized protocols. For instance, Damm, Jukna and
Sgall~\cite{DammJS98} show an $\Omega(n/k^2)$ communication lower bound
in the so-called {\em conservative} model, where $\plr_i$ has only a
limited view of the layers of the graph behind her: she only sees the
result of following the first $i-1$ pointers.
Chakrabarti~\cite{Chakrabarti07} extends this bound to randomized
protocols and also shows an $\Omega(n/k)$ lower bound in the so-called
{\em myopic} model, where $\plr_i$ has only a limited view of the layers
ahead of her: she cannot see layers $i+2, \ldots, k$. 

For the full NOF model, Wigderson, building on the work of Nisan and
Wigderson~\cite{NisanW93}, showed that $C(\mpj_3) = \Omega(\sqrt n)$.
This result is unpublished, but an exposition can be found in Babai,
Hayes and Kimmel~\cite{BabaiHK01}. Very recently, Viola and
Wigderson~\cite{ViolaW07} generalized this result and extended it to
randomized protocols, showing that $R_{1/3}(\mpj_k) =
\Omega(n^{1/(k-1)}/k^{O(k)})$. Of course, this bound falls far short of
that in Conjecture~\ref{conj:mpj-lin-lb} and does nothing for
Conjecture~\ref{conj:mpj-accz}. However, it is worth noting that the
Viola-Wigderson bound in fact applies to the much smaller subproblem of
{\em tree pointer jumping} (denoted $\tpj_k$), where the underlying
layered graph is a height-$k$ tree, with every vertex in layers $0$
through $k-2$ having $n^{1/(k-1)}$ children and every vertex in layer
$k-1$ having two children. It is easy to see that $C(\tpj_k) =
O(n^{1/(k-1)})$. Thus, one might hope that the more general problem
$\mpj_k$ has a much stronger lower bound, as in
Conjecture~\ref{conj:mpj-lin-lb}.

On the upper bound side, Damm et al.~\cite{DammJS98} 
%% have shown %% STACS only
show %% STACS only
that
$C(\mpjh_k) = O(n\log^{(k-1)}n)$, where $\log^{(i)}n$ is the $i$th
iterated logarithm of $n$. This improves on the trivial upper bound of
$O(n\log n)$. Their technique does not yield anything nontrivial for the
Boolean problem $\mpj_k$, though. However, Pudlak, R\"odl and
Sgall~\cite{PudlakRS97} obtain a sublinear upper bound of $O(n\log\log
n/\log n)$ for a special case of $\mpj_3$. Their protocol works only
when every vertex in layer $2$ has {\em indegree} $1$, or equivalently,
when the middle layer of edges in the input describes a {\em
permutation} of $[n]$.

\subsection{Our Results}

The protocol of Pudlak et al.~\cite{PudlakRS97} did not rule out
Conjecture~\ref{conj:mpj-lin-lb}, but it did suggest caution. Our first
result is the following upper bound --- in fact the first nontrivial
upper bound on $C(\mpj_k)$ --- that falsifies the conjecture.

\begin{thm} \label{thm:mpj-ub}
  For $k \ge 3$, we have
  \[
    C(\mpj_k) ~=~ O\left( n \left(\frac{k\log\log n}{\log n}\right)
      ^{(k-2)/(k-1)} \right) \, .
  \]
  In particular, $C(\mpj_3) = O(n\sqrt{\log\log n/\log n})$.
\end{thm}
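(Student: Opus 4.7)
The plan is to construct an explicit deterministic protocol for $\mpj_k$ matching the stated bound, handling first the base case $k = 3$ and then extending to $k > 3$ by iteration.

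\textbf{Base case $k = 3$.} Write the input as $(j, \pi_2, c)$ where $j = \pi_1(v_0) \in [n]$, $\pi_2 : [n] \to [n]$ is the middle layer, and $c = \pi_3 : [n] \to \{0,1\}$ is the final (Boolean) layer. Player~1 sees $(\pi_2, c)$ and can compute the composed function $g := c \circ \pi_2 \in \{0,1\}^n$; Player~2 sees $(j, c)$; Player~3 sees $(j, \pi_2)$. The goal is for Player~3 to output $g(j) = c_{\pi_2(j)}$. The naive protocol has Player~1 send $g$ in full, costing $n$ bits and matching the $\idx$ lower bound. The key to doing better is that Players~1 and~2 share $c$: Player~1 need not describe $g$ directly, but only to describe $\pi_2$ in sufficient detail that Player~2, using $c$, can recover $g(j)$. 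Concretely, I would partition layer-1 into $n/t$ blocks of size $t$ (for a parameter $t$ to be optimized) and have Player~1 send, for each block $B$, a short ``signature'' encoding how $\pi_2(B)$ relates to the bit values $c$ on layer-2 vertices; Player~2 then sends a short correction message, and Player~3 combines everything to output $g(j)$. Optimizing $t$ balances the contributions from Players~1 and~2, giving the bound $O(n\sqrt{\log\log n/\log n})$ for $k = 3$.

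\textbf{General $k > 3$.} For $k > 3$, I would iterate the base-case idea: Players $1, 2, \ldots, k - 2$ each perform a compression step that absorbs one layer of the composition into their message, leaving effectively a three-player instance for the last players. Each additional compression contributes a multiplicative saving of $(\log n / \log\log n)^{1/(k-1)}$, and compounding $k - 2$ such savings yields the exponent $(k - 2)/(k - 1)$ appearing in the stated bound.

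\textbf{Main obstacle.} The principal difficulty is designing the base-case encoding so that Player~2 can decode $g(j)$ using only her shared knowledge of $c$. Unlike the Pudlak--R\"odl--Sgall protocol for the permutation case, our protocol must handle arbitrary $\pi_2 : [n] \to [n]$, including functions whose preimage structure is heavily skewed, so that naive ``image-based'' encodings of $\pi_2(B)$ do not immediately suffice. Bounding the worst-case communication requires a careful combinatorial argument and accounts for the $\sqrt{\cdot}$ overhead separating our general bound from the permutation-case bound $O(n\log\log n/\log n)$. A secondary obstacle is verifying that, in the iterated protocol for $k > 3$, the compression of each player remains compatible with the next: the protocol must arrange that all but one player can be chosen to be collapsing (as foreshadowed in the abstract), so that the lower-bound obstructions for all-collapsing protocols are avoided.
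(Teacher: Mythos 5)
There is a genuine gap here: your sketch is missing the central idea of the paper's proof, and the part you do describe ("signatures" per block plus a "correction message") is left entirely unspecified -- you yourself flag this as "the principal difficulty" without resolving it, so there is no protocol on the table and no way to verify the claimed $\sqrt{\cdot}$ overhead.

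The paper does \emph{not} try to directly build a new encoding of the arbitrary middle function $f_2$. Instead it uses the Pudl\'ak--R\"odl--Sgall protocol $P$ for $\mpjperm_3$ as a black box (your proposal explicitly shies away from PRS on the grounds that $\pi_2$ need not be a permutation; that is precisely the obstacle the paper turns into a tool). The key device is a \emph{permutation cover}: for any $f:[n]\to[n]$ and parameter $d$ there is a set $\cA_d(f)$ of at most $d$ permutations such that for every $r\in[n]$, either some $\pi\in\cA_d(f)$ agrees with $f$ at $r$, or $r$ lies in a ``heavy'' fiber with $|f^{-1}(f(r))|>d$ (Lemma~\ref{lem:perm-cover}). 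Players~1 and~3, who both see $f_2$, can agree on $\cA_d(f_2)$ without communication. Player~1 then sends (a) the PRS first-message $\alpha(\pi,x)$ for each of the $\le d$ permutations $\pi\in\cA_d(f_2)$, and (b) the bit $x_s$ for every heavy image $s$ (there are at most $n/d$ of them). Player~2 responds with the PRS second-message $\beta(i,x,\alpha)$ for each $\alpha$ she received. Player~3 either reads off $x_{f_2(i)}$ directly from part (b), or finds $\pi_0\in\cA_d(f_2)$ with $\pi_0(i)=f_2(i)$ and runs the PRS decoder. The cost is $O(dm + n/d)$ with $m=O(n\phi(n))$, and balancing $d\approx 1/\sqrt{\phi(n)}$ gives $O(n\sqrt{\phi(n)})$; with $\phi(n)=O(\log\log n/\log n)$ from PRS this is the claimed $\mpj_3$ bound. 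Your block/signature picture does not match this, and in particular you have Player~2 doing the decoding, whereas in the actual protocol Player~2 merely answers all hypothetical PRS queries and Player~3 decodes.

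Your $k>3$ sketch is similarly nonspecific, and the claimed per-player saving of $(\log n/\log\log n)^{1/(k-1)}$ is asserted rather than derived. The paper's generalization introduces an $(S,d)$-cover (Lemma~\ref{lem:perm-s-cover}) and a nested sequence of ``live'' sets $S_1=[n]\supseteq S_2\supseteq\cdots$ with $|S_j|\le n/d^{j-1}$: Player~1 sends $k-2$ batches of PRS first-messages (one per layer, using the cover restricted to $S_{j}$) plus the output bits for $S_{k-1}$; Players~$2,\ldots,k-1$ each respond to one batch; Player~$k$ either decodes one embedded $\mpj_3$ instance or reads off the answer from the final part. The cost $O((k-2)dn\phi(n)+n/d^{k-2})$ is balanced at $d\approx ((k-2)\phi(n))^{-1/(k-1)}$, which is where the exponent $(k-2)/(k-1)$ comes from. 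Until you replace the ``signature'' step with a concrete and verifiably correct encoding, the $k=3$ case (and hence everything above it) is unproven.
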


A closer look at the protocol that achieves the upper bound above
reveals that all players except for $\plr_1$ behave in the following
way: the message sent by $\plr_i$ depends only on layers $1$ through
$i-1$ and the composition of layers $i+1$ through $k$. We say that
$\plr_i$ is {\em collapsing}. This notion is akin to that of the
aforementioned conservative protocols considered by Damm et al. Whereas
a conservative player composes the layers behind hers, a collapsing
player does so for layers ahead of hers. 

We consider what happens if we require {\em all} players in the protocol
to be collapsing. We prove a strong linear lower bound, showing that
even a single non-collapsing player makes an asymptotic difference in
the communication complexity.

\begin{thm} \label{thm:collapsing-lb}
  In a protocol for $\mpj_k$ where every player is collapsing, some
  player must communicate at least $n - \frac12\log n - 2 = n - O(\log n)$ 
  bits.
\end{thm}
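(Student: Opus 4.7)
The plan is to argue by contradiction: assume every player's message has fewer than $c := n - \tfrac{1}{2}\log n - 2$ bits, and exhibit two inputs that produce identical transcripts while demanding different outputs.

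The leverage is $\plr_1$'s collapsing constraint, which forces $M_1$ to be a function purely of $g_1 := f_k \circ \cdots \circ f_2 \in \{0,1\}^{[n]}$. I restrict attention to compositions of Hamming weight exactly $n/2$ (of which there are $\binom{n}{n/2} \approx 2^n/\sqrt{n}$), and pigeonhole on $M_1$'s $\leq 2^c$ outputs to obtain a message value $\mu^*$ whose preimage $B$ within the weight-$n/2$ slice has size $|B| \geq \binom{n}{n/2}/2^c \geq \Omega(1)$, in particular $|B| \geq 2$. Pick distinct $g, g' \in B$; they must differ at some coordinate $a_1^*$, so WLOG $g(a_1^*) = 0$ and $g'(a_1^*) = 1$.

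For each bijection $\sigma$ of $[n]$, form the two inputs $X_\sigma = (a_1^*, f_2{=}\sigma, f_3{=}\cdots{=}f_{k-1}{=}\mathrm{id}, f_k{=}g \circ \sigma^{-1})$ and $X'_\sigma$ defined analogously with $g'$ in place of $g$. In both cases $g_1 \in B$, so $\plr_1$'s message is uniformly $\mu^*$, yet the required outputs are $0$ and $1$ respectively. Crucially, $\plr_k$'s view --- $(a_1^*, \sigma, \mathrm{id}, \ldots, \mathrm{id})$ plus the transcript --- is identical on $X_\sigma$ and $X'_\sigma$ except for the transcript itself. Applying the collapsing constraint to each intermediate $\plr_i$ ($2 \le i \le k-1$), whose message depends only on $(a_1^*, f_2, \ldots, f_{i-1}, g_i)$ and for which telescoping the identity layers gives $g_i \in \{g \circ \sigma^{-1}, g' \circ \sigma^{-1}\}$, correctness forces that for some player $i$ the ``effective message'' $L(\cdot) := M_i(a_1^*, \cdot)$ satisfies $L(g \circ \sigma^{-1}) \neq L(g' \circ \sigma^{-1})$ for \emph{every} bijection $\sigma$.

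The combinatorial heart of the argument translates this separation-across-$\sigma$ requirement into a graph-coloring lower bound. As $\sigma$ ranges over $\mathrm{Sym}([n])$, the pair $(g \circ \sigma^{-1}, g' \circ \sigma^{-1})$ ranges over \emph{all} pairs of weight-$n/2$ functions sharing the joint pattern $P = (a_{00}, a_{01}, a_{10}, a_{11})$ of $(g, g')$. So $L$ must properly color the graph $H_P$ on weight-$n/2$ vertices whose edges are precisely these pairs, giving $m_i \geq \log \chi(H_P)$. The main obstacle is the chromatic estimate $\chi(H_P) \geq 2^{n - \frac{1}{2}\log n - O(1)}$: for degenerate joint patterns (e.g., complementary pairs, where $H_P$ is a matching with $\chi = 2$) this bound fails, so the argument must either guarantee that some pair drawn from $B$ has a non-degenerate $P$ --- exploiting the preimage structure of $M_1$ and the fact that $|B|$ is strictly larger than $2$ --- or combine constraints from multiple pairs in $B$ simultaneously to force a rich composite graph whose coloring requires $n - O(\log n)$ bits. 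This orbit-and-coloring perspective, rather than the information-theoretic discrepancy arguments of prior work, is presumably where the novelty of the paper's technique lies.
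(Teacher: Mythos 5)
Your opening move --- pigeonhole on $\plr_1$'s message over the weight-$n/2$ slice, bounded via Stirling --- is exactly the paper's opening move, and the degenerate-pattern issue you flag at the end is exactly the obstacle the paper's ``crossing pair'' machinery is designed to overcome. But there are two gaps beyond the one you acknowledge, and they are where the approaches really diverge.

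First, the graph-coloring reduction to a \emph{single} joint type $P$ does not give the bound you want, even after you rule out the truly degenerate patterns. For a weight-$n/2$ pair, the joint type is determined by $d := |I_{01}|$, and for small $d$ (say $d=1$, so edges join strings differing by a single transposition), $\chi(H_P)$ is only polynomial in $n$: e.g.\ coloring $y$ by $\sum_i i\,y_i \bmod n$ gives a proper coloring with $n$ colors. So ``non-degenerate'' is nowhere near enough. What the paper actually uses is the union over \emph{all} crossing types: it defines a pair $(x,x')$ to be crossing if all four $I_{ab}(x,x')$ are nonempty, shows (Lemma~\ref{lem:uncrossed-half-weight}) that a set containing no crossing pair has at most two weight-$n/2$ members, and deduces (Lemma~\ref{lem:crossed-part}) that any partition into $2^t$ parts with $t \le n - \frac12\log n - 2$ has a crossed part. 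This is your ``combine constraints from multiple pairs simultaneously'' option, and it is essential --- the single-type coloring bound is simply false.

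Second, and more seriously, your construction is non-adaptive: you fix $f_3 = \cdots = f_{k-1} = \mathrm{id}$ and only vary $f_2 = \sigma$. For $k>3$ this breaks, because every $\plr_j$ with $j\ge 3$ \emph{sees} $\sigma$ (it is one of the layers behind her forehead) and also sees $\hat{x}_j = g\circ\sigma^{-1}$; she can compose them and, e.g., send some bits of $\hat{x}_j\circ f_2$. On your restricted family this equals $g$ (resp.\ $g'$), so such a player can separate $X_\sigma$ from $X'_\sigma$ for \emph{every} $\sigma$ regardless of how $\plr_1$ and $\plr_2$ behave. Nothing in the collapsing definition forbids this, so you cannot reduce to a single player $i$ with a $\sigma$-free separator $L(\cdot)$. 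Once the dependence on $\sigma$ is restored, the orbit-and-coloring picture collapses: the ``coloring'' constraint now involves the permutation as an argument, and the chromatic lower bound no longer applies. The paper sidesteps this entirely by an adaptive induction: given a crossing pair $(x,x')$ surviving up to player $j$, it inspects $\plr_{j+1}$'s partition of $\b^n$ (which may depend on everything fixed so far), finds a \emph{fresh} crossing pair $(y,y')$ in some confusable class, and then \emph{chooses} $f_{j+1}\in\nn$ (an arbitrary function, not a bijection --- your restriction to $\cS_n$ is also unnecessary and in fact obstructive, since it forbids changing the joint type) to map $I_{ab}(x,x')$ into $I_{ab}(y,y')$; the crossing property of $(y,y')$ is precisely what guarantees all four targets are nonempty, so such an $f_{j+1}$ exists. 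This layer-by-layer re-crossing is the key technical idea your proposal is missing.
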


Finally, one might wonder whether the collapsing requirement is so
strong that nothing nontrivial is possible anyway. The same question can
be raised for the conservative and myopic models where $\Omega(n/k^2)$
and $\Omega(n/k)$ lower bounds were proven in past work. It turns out
that the upper bound on $C(\mpjh_k)$ due to Damm et al.~\cite{DammJS98}
(see Section~\ref{sec:prev}) is achievable by a protocol that is both
conservative and myopic. We can show a similar upper bound via a
different protocol where every player is collapsing.

\begin{thm} \label{thm:collapsing-ub}
  %% For $k \ge 3$, there is a $O(n \log^{(k-1)}n)$-communication protocol
  %% for $\mpjh_k$ in which every player is collapsing, provided layers $2$
  %% through $k$ of the input graph are permutations, with possibly one
  %% layer excepted.
  For $k \ge 3$, there is an $O(n \log^{(k-1)}n)$-communication protocol
  for $\mpjhperm_k$ in which every player is collapsing. Here
  $\mpjhperm_k$ denotes the subproblem of $\mpjh_k$ in which layers $2$
  through $k$ of the input graph are permutations of $[n]$.
\end{thm}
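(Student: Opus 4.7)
The plan is to have each player $\plr_i$ (for $1 \le i \le k-1$) transmit a partial description of her single accessible ``future'' object---the composition $\sigma_i := \pi_k \circ \cdots \circ \pi_{i+1}$, which under the collapsing restriction is the only thing she may use from layers $i+1$ through $k$. Since layers $2,\ldots,k$ are permutations, each $\sigma_i$ is itself a permutation of $[n]$. The players will progressively narrow down the answer $y := \sigma_1(a)$, and $\plr_k$ will read it off the blackboard at the end.

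Set $s_i := \lceil \log^{(k-i)} n \rceil$ for $1 \le i \le k-1$, so $s_{k-1} = \log n$ and $s_{i-1} \approx \log s_i$; let $c_{i-1} := \lceil n / 2^{s_{i-1}} \rceil$, giving $c_0 = n$ and $c_{i-1} s_i = \Theta(n)$ for $i \ge 2$. First, $\plr_1$ sends, for each $j \in [n]$ in natural order, the leading $s_1$ bits of $\sigma_1(j)$, using $O(n \log^{(k-1)} n)$ bits. For $2 \le i \le k-1$: using $\pi_1, \ldots, \pi_{i-1}$, $\plr_i$ computes the vertices $a_1, \ldots, a_{i-1}$ and reads off the leading $s_{i-1}$ bits of $y$ from the blackboard (extracting bits $s_{j-1}+1, \ldots, s_j$ at index $a_j$ of $\plr_j$'s message, for each $j < i$). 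This determines a candidate set $C_{i-1} \subseteq [n]$ of size $c_{i-1}$ that contains $y$. She then computes $T_i := \sigma_i^{-1}(C_{i-1})$, which must contain $a_i$, and transmits (i)~$T_i$ encoded as a subset of $[n]$ of size $c_{i-1}$, and (ii)~in the order of elements of $T_i$, the next $s_i - s_{i-1}$ bits of $\sigma_i(j)$ for each $j \in T_i$. Finally $\plr_k$---who sees $\pi_1, \ldots, \pi_{k-1}$ and thus knows all of $a_1, \ldots, a_{k-1}$---assembles the bits at the appropriate indices from every previous message, reconstructs all $\log n$ bits of $y$, and outputs it.

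Correctness follows inductively from the identity $y = \sigma_i(a_i) \in C_{i-1}$, which gives $a_i \in T_i$ and hence ensures $\plr_i$'s message contributes bits $s_{i-1}+1, \ldots, s_i$ of $y$. Each message is a function only of $\pi_1, \ldots, \pi_{i-1}$, $\sigma_i$, and the blackboard, so every player is collapsing. For the bit count: $\plr_1$ contributes $O(n \log^{(k-1)} n)$ bits; for $i \ge 2$, the value bits total $c_{i-1} (s_i - s_{i-1}) \le c_{i-1} s_i = O(n)$, and the subset encoding of $T_i$ adds $O(c_{i-1} \log(n/c_{i-1})) = O((n/s_i) \log s_i) = o(n)$ bits. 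Summing over the $k-1$ active players yields $O(n \log^{(k-1)} n + (k-2) n) = O(n \log^{(k-1)} n)$ for $k \ge 3$ and $n$ large enough that $\log^{(k-1)} n \ge 1$.

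The main obstacle is verifying that the collapsing property is compatible with decodability: $\plr_i$ sees $\sigma_i$ only as a composite, not the individual layers $\pi_{i+1}, \ldots, \pi_k$, yet she must compute both $T_i$ and the bit values of $\sigma_i|_{T_i}$. This goes through because both quantities are functions of $\sigma_i$ alone. A secondary concern is that the subset encodings of the various $T_i$ might inflate the communication, but the iterated-log parameter choice keeps the per-player encoding overhead at $o(n)$ bits, comfortably inside the target.
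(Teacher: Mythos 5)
Your protocol is essentially the paper's: $\plr_1$ coarsely buckets the future composition $\sigma_1 = \hat f_1$, each subsequent $\plr_i$ reads off the current bucket of the answer from the blackboard, restricts to $T_i = \sigma_i^{-1}(\text{bucket})$, and refines the bucketing on that preimage; player $k$ reads the answer off the board. The differences (incremental bit-slices vs.\ full bucket IDs, compressed subset encoding vs.\ an $n$-bit indicator, explicitly tracking all prior messages vs.\ the immediately preceding one) are cosmetic and do not change the argument or the bound.
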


The requirement that layers be permutations is a natural one and is not
new. The protocol of Pudlak et al.~also had this requirement; i.e., it
gave an upper bound on $C(\mpjperm_3)$. Theorem~\ref{thm:collapsing-ub}
can in fact be strengthened slightly by allowing one of the layers from
$2$ through $k$ to be arbitrary; we formulate and prove this stronger
version in Section~\ref{sec:collapsing-ub}.

\subsection{Organization}

The rest of the paper is organized as follows.
Theorems~\ref{thm:mpj-ub}, \ref{thm:collapsing-lb}
and~\ref{thm:collapsing-ub} are proven in Sections~\ref{sec:mpj-ub},
\ref{sec:collapsing-lb} and~\ref{sec:collapsing-ub} respectively.
Section~\ref{sec:prelim} introduces some notation that is used in
subsequent sections.

% --------------------> SECTION <--------------------

\section{A Sublinear Upper Bound} \label{sec:mpj-ub}

\subsection{Preliminaries, Notation and Overall Plan} \label{sec:prelim}

For the rest of the paper, ``protocols'' will be assumed to be
deterministic one-way NOF protocols unless otherwise qualified. We shall
use $\cost(P)$ to denote the total number of bits communicated in $P$,
for a worst case input.

Let us formally define the problems $\mpj_k$ and $\mpjh_k$. We shall
typically write the input $k$-tuple for $\mpj_k$ as
$(i,f_2,\ldots,f_{k-1},x)$ and that for $\mpjh_k$ as
$(i,f_2,\ldots,f_k)$, where $i\in[n]$, each $f_j\in\nn$ and $x\in\b^n$.
We then define $\mpj_k: [n]\times\left(\nn\right)^{k-2}\times \b^n\to\b$
and $\mpjh_k: [n]\times\left(\nn\right)^{k-1}\to[n]$ as follows.
\begin{alignat*}{2}
  \mpj_2(i, x) &:= x_i \, ; & \quad
  \mpj_k(i, f_2, f_3, \ldots, f_{k-1}, x) & := 
  \mpj_{k-1}(f_2(i), f_3, \ldots, f_{k-1}, x)\, , \mbox{~for~} k \ge 3\, \\
  \mpjh_2(i, f) &:= f(i) \, ; & \quad
  \mpjh_k(i, f_2, f_3, \ldots, f_k) & := 
  \mpjh_{k-1}(f_2(i), f_3, \ldots, f_k) \, , \mbox{~for~} k \ge 3\, .
\end{alignat*}
Here, $x_i$ denotes the $i$th bit of the string $x$. It will be helpful,
at times, to view strings in $\b^n$ as functions from $[n]$ to $\b$ and
use functional notation accordingly.  It is often useful to discuss the
composition of certain subsets of the inputs.  Let $\hat{i}_2 := i$, and
for $3 \le j \le k$, let $\hat{i}_j := f_{j-1}\circ \cdots \circ
f_2(i)$.  Similarly, let $\hat{x}_{k-1} := x$, and for $1 \le j \le
k-2$, let $\hat{x}_j := x\circ f_{k-1} \circ \cdots \circ f_{j+1}$.
Unrolling the recursion in the definitions, we see that, for $k
\ge 2$,
\begin{gather} 
  \mpj_k(i,f_2,\ldots,f_{k-1},x) 
  ~=~ x\circ f_{k-1}\circ\cdots\circ f_2(i)
  ~=~ \hat{x}_1(i) 
  ~=~ x_{\hat{i}_k} \, ; \label{eq:mpj-def} \\
  \mpjh_k(i,f_2,\ldots,f_k) 
  ~=~ f_k\circ\cdots\circ f_2(i) 
  ~=~ f_k(\hat{i}_k) \, . \label{eq:mpjh-def}
\end{gather}

We also consider the subproblems $\mpjperm_k$ and $\mpjhperm_k$ where
each $f_j$ above is a bijection from $[n]$ to $[n]$ (equivalently, a
permutation of $[n]$).  We let $\cS_n$ denote the set of all
permutations of $[n]$. 

Here is a rough plan of the proof of our sublinear upper bound. We
leverage the fact that a protocol $P$ for $\mpjperm_3$ with sublinear
communication is known. To be precise:
\begin{fact}[{Pudlak, R\"odl and Sgall~\cite[Corollary~4.8]{PudlakRS97}}]
\label{fact:prs-ub}
  $C(\mpjperm_3) = O(n\log\log n/\log n)$.
\end{fact}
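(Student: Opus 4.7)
The plan is to leverage the fact that $f_2$ is a permutation via a block decomposition of the range. Fix a parameter $b$ and partition $[n]$ into $m = n/b$ blocks $B_1, \ldots, B_m$ of size $b$ each. Since $f_2 \in \cS_n$, every block $B_\ell$ has exactly $b$ preimages under $f_2$; this balanced refinement is the key feature of the permutation case that disappears for arbitrary $f_2$ and is what we intend to exploit.

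The high-level strategy rests on the following observation. $\plr_3$, who sees $(i, f_2)$, can compute $j^{\ast} := f_2(i)$ on her own, and hence already knows both the block index $\ell^{\ast}$ with $j^{\ast} \in B_{\ell^{\ast}}$ and the offset of $j^{\ast}$ within $B_{\ell^{\ast}}$. She only needs the preceding players to deliver (in effect) the content $x|_{B_{\ell^{\ast}}}$. The obstacle is that neither $\plr_1$ (who lacks $i$) nor $\plr_2$ (who lacks $f_2$) knows $\ell^{\ast}$, so their joint message must let $\plr_3$ look up $x|_{B_\ell}$ for any $\ell$ she might eventually need.

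To compress below the trivial $n$ bits, I would exploit that both $\plr_1$ and $\plr_2$ see $x$ in full, so they can agree a priori on a codebook for $x$ whose codeword lengths track the empirical frequencies of the $2^b$ possible $b$-bit patterns occurring among the $m$ blocks of $x$. $\plr_1$ transmits the codebook together with the per-block codeword sequence in the natural block order, and $\plr_2$ adds a short supplement that uses her knowledge of $i$ to pin down the decoding at the block $\plr_3$ actually cares about. The parameter $b$ is then tuned to balance codebook cost against amortized per-block cost; the target rate $O(n\log\log n/\log n)$ corresponds to the regime $b = \Theta(\log n)$, where pattern frequencies concentrate heavily enough on few values that the amortized codeword length is $O(\log\log n)$.

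The hard part is the encoding calculation: one must show that this combined two-message scheme delivers $x|_{B_{\ell^{\ast}}}$ to $\plr_3$ within the stated budget on \emph{every} input $(i, f_2, x)$, not just in expectation over some distribution. This is a delicate worst-case combinatorial optimization and is where the bulk of the Pudlak--R\"odl--Sgall argument lives. The bijectivity of $f_2$ enters crucially at this step: the balanced preimage structure is what lets one bound the worst-case encoding cost blockwise and makes the codebook-plus-pointers scheme beat the trivial $n$-bit bound; without it, the same protocol collapses back to $\Theta(n)$ communication.
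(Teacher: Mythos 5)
The paper does not prove Fact~\ref{fact:prs-ub} at all: it is imported verbatim from Pudl\'ak, R\"odl and Sgall~\cite[Corollary 4.8]{PudlakRS97} and explicitly used as a black box (``The exact structure of $P$ will not matter; we shall only use $P$ as a black box''). So there is no in-paper proof to compare your sketch against, and a complete argument would have to reproduce the PRS machinery itself.

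That said, your sketch has a concrete gap that prevents it from working even in outline. The scheme you describe --- fix $b$-bit blocks of $x$, let $\plr_1$ and $\plr_2$ agree on a frequency-based codebook for the block patterns, and have $\plr_1$ send the codebook plus the codeword sequence --- is simply a prefix-free compression of $x$. In the worst case ($x$ uniformly random, $b = \Theta(\log n)$) the $n/b$ blocks are essentially all distinct, so the codebook alone costs roughly $(n/b)(b + \log(n/b)) = \Omega(n)$ bits and the codeword stream costs another $\Omega(n)$ bits; the scheme is strictly worse than the trivial $n$-bit protocol, and no choice of $b$ rescues it. The supplement you assign to $\plr_2$ cannot help either: $\plr_2$ sees $(i,x)$ but not $f_2$, so she does not know $\ell^* = \cB[f_2(i)]$ and hence cannot ``pin down the decoding at the block $\plr_3$ actually cares about.'' Finally, the bijectivity of $f_2$, which you correctly flag as the crucial ingredient, never actually enters your encoding: the codebook and codeword stream you propose depend on $x$ alone, not on $f_2$, so the scheme would be unchanged if $f_2$ were an arbitrary function --- contradicting your own claim that bijectivity is what makes it work. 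In short, you have correctly identified that the hard part is a worst-case combinatorial encoding argument, but the mechanism you propose does not supply it and has no visible route to doing so; it is not a proof of the fact, and it is not the PRS argument either.
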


The exact structure of $P$ will not matter; we shall only use $P$ as a
black box. To get a sense for why $P$ might be useful for, say,
$\mpj_3$, note that the players could replace $f_2$ with a permutation
$\pi$ and just simulate $P$, and this would work if $\pi(i) =
f(i)$. Of course, there is no way for $\plr_1$ and $\plr_3$ to agree
on a suitable $\pi$ without communication. However, as we shall see
below, it is possible for them to agree on a small enough {\em set} of
permutations such that either some permutation in the set is suitable,
or else only a small amount of side information conveys the desired
output bit to $\plr_3$.

This idea eventually gives us a sublinear protocol for $\mpj_3$.
Clearly, whatever upper bound we obtain for $\mpj_3$ applies to
$\mpj_k$ for all $k \ge 3$. However, we can decrease the upper bound
as $k$ increases, by embedding several instances of $\mpj_3$ into
$\mpj_k$. For clarity, we first give a complete proof of
Theorem~\ref{thm:mpj-ub} for the case $k = 3$.

\subsection{A $\mathbf{3}$-Player Protocol}

Following the plan outlined above, we prove Theorem~\ref{thm:mpj-ub} for
the case $k = 3$ by plugging Fact~\ref{fact:prs-ub} into the following
lemma, whose proof is the topic of this section.

\begin{lem} \label{lem:mpj-three-ub}
  Suppose $\phi:\Z^+\to(0,1]$ is a function such that $C(\mpjperm_3) =
  O(n\phi(n))$. Then $C(\mpj_3) = O(n\sqrt{\phi(n)})$.
\end{lem}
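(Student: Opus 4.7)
My plan is to treat the hypothesized protocol $P$ for $\mpjperm_3$ as a black box and simulate it several times on carefully chosen permutations that together cover the action of $f$. Fix a parameter $s$ to be optimized at the end. Call a value $v \in [n]$ \emph{heavy} if $|f^{-1}(v)| > s$, and let $H$ be the set of heavy values; by pigeonhole $|H| \leq n/s$. Both Player~1 and Player~3 see $f$, so they can agree on $H$ without any communication.

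The first part of the protocol handles heavy values: Player~1 transmits the $|H|$ bits $\{x_v : v \in H\}$ in a canonical order, spending at most $n/s$ bits. Player~3 parses them and, whenever $f(i) \in H$, outputs $x_{f(i)}$ directly.

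The second part covers the remaining case $f(i) \notin H$. From $f$ alone I will exhibit $s$ permutations $\pi_1, \ldots, \pi_s \in \cS_n$ such that for every $i$ with $f(i) \notin H$ some $\pi_j$ satisfies $\pi_j(i) = f(i)$. To get them, form the bipartite multigraph on $[n] \sqcup [n]$ with an edge $(i, f(i))$ for each such $i$; every vertex has degree at most $s$, so a standard deficiency argument completes it to an $s$-regular bipartite multigraph, and König's edge-coloring theorem decomposes this into $s$ perfect matchings $\pi_1, \ldots, \pi_s$. Since the construction depends only on $f$, Player~1 and Player~3 both produce the same $\pi_j$. For each $j$ the players simulate $P$ on the instance $(i, \pi_j, x)$: Player~1 sends Player~1's message of $P$ applied to $(\pi_j, x)$, and Player~2 replies with Player~2's message of $P$. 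The key point is that Player~2's role in $P$ depends only on her view $(i, x)$ and the incoming message, so she never needs to know $\pi_j$. The cost of this part is $s\cdot\cost(P) = O(sn\phi(n))$. When $f(i) \notin H$, Player~3 picks the $j$ with $\pi_j(i) = f(i)$ and applies $P$'s output rule to the $j$-th simulation; since $\pi_j$ is a permutation, this correctly returns $x_{\pi_j(i)} = x_{f(i)}$.

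Balancing $n/s$ against $sn\phi(n)$ by choosing $s = 1/\sqrt{\phi(n)}$ yields total communication $O(n\sqrt{\phi(n)})$. The main obstacle I anticipate is constructing, from $f$, a small family of permutations that collectively cover the non-heavy edges of $f$; König's theorem resolves this cleanly, and the rest of the argument is bookkeeping together with the verification that Player~2 never needs to know which $\pi_j$ is driving a given simulation.
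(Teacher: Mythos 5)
Your proposal is correct and follows essentially the same architecture as the paper's proof: split $[n]$ into heavy and light values by a threshold $s$, have $\plr_1$ directly reveal $x_v$ on heavy $v$, cover the light part of $f$ with $s$ permutations on which $\plr_1$ and $\plr_3$ agree silently, run the black-box $\mpjperm_3$ protocol once per permutation (with the key observation, which you make correctly, that $\plr_2$'s messages in $P$ depend only on $(i,x)$ and $\plr_1$'s message, not on the permutation), and balance $n/s$ against $s\cdot O(n\phi(n))$. The one place you diverge is in producing the covering family of permutations: you complete the light edges of $f$ to an $s$-regular bipartite multigraph and invoke K\"onig's edge-coloring / matching-decomposition theorem, whereas the paper gives an explicit hands-on construction (Lemma~\ref{lem:perm-cover}) that partitions $[n]$ into the fibers $A_i = f^{-1}(s_i)$, pairs them with blocks $B_i$ of matching size, and cyclically shifts within each pair to generate $d$ bijections. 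Your route is shorter and leans on a classical theorem; the paper's is self-contained and, perhaps more importantly, generalizes smoothly to the $(S,d)$-cover notion (Lemma~\ref{lem:perm-s-cover}) needed for the $k$-player protocol, where the cyclic-shift construction is tweaked so that elements of $S$ are preferentially mapped to the distinguished element $s_i$. If you intend to extend your argument to general $k$, you would need to check that the K\"onig-based construction can be similarly adapted to prioritize a subset $S$ of the domain.
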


\begin{defn} \label{def:perm-cover}
  A set $\cA\ceq\cS_n$ of permutations is said to $d$-cover a function
  $f:[n]\to[n]$ if, for each $r\in[n]$, at least one of the following
  conditions holds:
  \begin{itemize}
    \item[(i)] $\exists\, \pi\in\cA$ such that $\pi(r) = f(r)$, or
    \item[(ii)] $|f^{-1}(f(r))| > d$.
  \end{itemize}
\end{defn}

\begin{lem} \label{lem:perm-cover}
  Let $f:[n]\to[n]$ be a function and $d$ be a positive integer. There
  exists a set $\cA_d(f)\ceq\cS_n$, with $|\cA_d(f)| \le d$, that
  $d$-covers $f$.
\end{lem}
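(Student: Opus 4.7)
The plan is to reduce the covering problem to an edge-coloring problem on a bipartite graph, then apply König's edge coloring theorem.

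First, I would build the following bipartite ``graph of $f$'', keeping only the edges that matter. Let $G$ be a bipartite graph on vertex classes $L = [n]$ and $R = [n]$, with edge set
\[
  E(G) ~=~ \bigl\{ (r, f(r)) \,:\, r \in [n] \text{ and } |f^{-1}(f(r))| \le d \bigr\}\, .
\]
Every $r \in L$ has at most one neighbor in $G$ (namely $f(r)$, and only if its fiber is small), and every $y \in R$ with $|f^{-1}(y)| \le d$ has degree exactly $|f^{-1}(y)| \le d$, while other vertices in $R$ have degree $0$. Hence $G$ is bipartite of maximum degree at most $d$.

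Next I would apply König's edge-coloring theorem for bipartite graphs: the edges of $G$ can be partitioned into $d$ matchings $M_1, \ldots, M_d$. Each $M_\ell$ is a (partial) matching between $L$ and $R$, so it defines a partial injection $\sigma_\ell$ from a subset of $[n]$ to $[n]$. I then extend each $\sigma_\ell$ to a full permutation $\pi_\ell \in \cS_n$ by choosing an arbitrary bijection between the unmatched vertices of $L$ under $M_\ell$ and the unmatched vertices of $R$ under $M_\ell$ (these two sets have the same cardinality). Setting $\cA_d(f) = \{\pi_1, \ldots, \pi_d\}$ gives a family of at most $d$ permutations.

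To verify the $d$-covering property, fix any $r \in [n]$. If $|f^{-1}(f(r))| > d$, condition (ii) of Definition~\ref{def:perm-cover} holds and we are done. Otherwise the edge $(r, f(r))$ lies in $E(G)$, so it belongs to some color class $M_\ell$, and by construction $\pi_\ell(r) = f(r)$, giving condition (i).

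I do not expect a serious obstacle here; the only mildly delicate point is invoking the right version of König's theorem (bipartite graphs have chromatic index equal to their maximum degree), but this is entirely standard. The rest is a straightforward bookkeeping argument about extending partial matchings to permutations.
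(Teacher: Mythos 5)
Your proof is correct, but it takes a genuinely different route from the paper's. The paper gives a fully explicit, elementary construction: it partitions the domain into the fibers $A_i = f^{-1}(s_i)$, pairs each $A_i$ with a set $B_i$ of equal size containing $s_i$, and defines $d$ cyclic-shift bijections $\pi_{i,\ell}:A_i\to B_i$; the $\pi_\ell$ are the disjoint unions of these shifts. Your argument instead encodes the relevant pairs $(r,f(r))$ as a bipartite graph of maximum degree at most $d$ and invokes K\"onig's edge-coloring theorem to split the edges into $d$ matchings, which you then extend to permutations. Both arguments are sound, and your verification of the $d$-covering property is clean.

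One observation worth making: K\"onig's theorem is more machinery than you need. In your graph $G$ every left vertex $r$ has degree at most $1$, so $G$ is simply a disjoint union of stars centered at points of $\Range(f)$, and a disjoint union of stars trivially admits a proper edge-coloring with $\Delta(G)\le d$ colors (color the edges of each star distinctly). If you observe this, your proof becomes essentially as elementary as the paper's, and in fact the paper's cyclic-shift construction is a concrete way of realizing exactly such a star-by-star coloring together with a choice of extension to a full permutation. The trade-off is that the paper's version is fully constructive and self-contained, which matters because the protocol of Lemma~\ref{lem:mpj-three-ub} requires the players to agree in advance on a \emph{specific} $\cA_d(f)$ for every $f$; your proof achieves the same once you pin down the coloring and the extension, but this should be said explicitly rather than left to ``choose an arbitrary bijection.''
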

\begin{proof}
  We give an explicit algorithm to construct $\cA_d(f)$. Our strategy is
  to partition the domain and codomain of $f$ (both of which equal
  $[n]$) into parts of matching sizes and then define bijections between
  the corresponding parts. To be precise, suppose $\Range(f) = \{s_1,
  s_2, \ldots, s_t\}$. Let $A_i = f^{-1}(s_i)$ be the corresponding
  fibers of $f$. Clearly, $\{A_i\}_{i=1}^t$ is a partition of $[n]$. It
  is also clear that there exists a partition $\{B_i\}_{i=1}^t$ of $[n]$
  such that, for all $i\in[t]$, $B_i \cap \Range(f) = \{s_i\}$ and
  $|B_i| = |A_i|$. We shall now define certain bijections $\pi_{i,\ell}:
  A_i \to B_i$, for each $i\in[t]$ and $\ell\in[d]$.

  Let $a_{i,1} < a_{i,2} < \cdots < a_{i,|A_i|}$ be the elements of
  $A_i$ arranged in ascending order. Similarly, let $b_{i,1} < \cdots <
  b_{i,|B_i|}$ be those of $B_i$. We define
  \[
    \pi_{i,\ell}(a_{i,j}) ~:=~ b_{i,(j-\ell)\bmod|B_i|} \, , \quad
    \mbox{for~} i\in[t], \ell\in[d] \, ,
  \]
  where, for convenience, we require ``$\alpha \bmod \beta$'' to return
  values in $[\beta]$, rather than $\{0,1,\ldots,\beta-1\}$.  It is
  routine to verify that $\pi_{i,\ell}$ is a bijection.  Notice that
  this construction ensures that for all $i\in[t]$ and $j\in[|A_i|]$ we
  have
  \begin{equation} \label{eq:perm-cover}
    |\{\pi_{i,\ell}(a_{i,j}):\, \ell\in[d]\}| ~=~ \min\{d, |B_i|\} \, .
  \end{equation}
  Let $\pi_\ell:[n]\to[n]$ be the bijection given by taking the
  ``disjoint union'' of $\pi_{1,\ell}, \ldots, \pi_{t,\ell}$. We claim
  that $\cA_d(f) = \{\pi_1, \ldots, \pi_d\}$ satisfies the conditions of
  the lemma.

  It suffices to verify that this choice of $\cA_d(f)$ $d$-covers $f$,
  i.e., to verify that every $r\in[n]$ satisfies at least one of the two
  conditions in Definition~\ref{def:perm-cover}.  Pick any $r\in[n]$.
  Suppose $r\in A_i$, so that $f(r)\in B_i$ and $\pi_\ell(r) =
  \pi_{i,\ell}(r)$. If $|B_i| > d$, then $|f^{-1}(f(r))| = |A_i| = |B_i|
  > d$, so condition~(ii) holds.  Otherwise, from Eq.~\eqref{eq:perm-cover},
  we conclude that $\{\pi_{i,\ell}(r):\, \ell\in[d]\} = B_i$. Therefore,
  for each $s\in B_i$ --- in particular, for $s = f(r)$ --- there exists
  an $\ell\in[d]$ such that $\pi_\ell(r) = \pi_{i,\ell}(r) = s$, so
  condition~(i) holds.
\end{proof}

\begin{proof}[Proof of Lemma~\ref{lem:mpj-three-ub}]

  Let $(i,\pi,x) \in [n]\times\cS_n\times\b^n$ denote an input for the
  problem $\mpjperm_3$. Then the desired output is $x_{\pi(i)}$. The
  existence of a protocol $P$ for $\mpjperm_3$ with $\cost(P) =
  O(n\phi(n))$ means that there exist functions
  \begin{gather*}
    \alpha:\cS_n\times\b^n\to\b^m \, , ~~
    \beta:[n]\times\b^n\times\b^m\to\b^m \, , \mbox{~~and~} \\
    \gamma:[n]\times\cS_n\times\b^m\times\b^m\to\b \, ,
  \end{gather*}
  where $m = O(n\phi(n))$, such that
  $\gamma(i,\pi,\alpha(\pi,x),\beta(i,x,\alpha(\pi,x))) = x_{\pi(i)}$.
  The functions $\alpha, \beta$ and $\gamma$ yield the messages in $P$
  of $\plr_1, \plr_2$ and $\plr_3$ respectively. 

  To design a protocol for $\mpj_3$, we first let $\plr_1$ and $\plr_3$
  agree on a parameter $d$, to be fixed below, and a choice of
  $\cA_d(f)$ for each $f:[n]\to[n]$, as guaranteed by
  Lemma~\ref{lem:perm-cover}. Now, let $(i,f,x) \in
  [n]\times\nn\times\b^n$ be an input for $\mpj_3$. Our protocol works
  as follows.
  \begin{itemize}
    \item $\plr_1$ sends a two-part message. The first part consists of
    the strings $\{\alpha(\pi,x)\}_\pi$ for all $\pi\in\cA_d(f)$. The
    second part consists of the bits $x_s$ for $s\in[n]$ such that
    $|f^{-1}(s)| > d$.
    \item $\plr_2$ sends the strings $\{\beta(i,x,\alpha)\}_\alpha$ for 
    all strings $\alpha$ in the first part of $\plr_1$'s message.
    \item $\plr_3$ can now output $x_{f(i)}$ as follows. If
    $|f^{-1}(f(i))| > d$, then she reads $x_{f(i)}$ off from the second
    part of $\plr_1$'s message. Otherwise, since $\cA_d(f)$ $d$-covers
    $f$, there exists a $\pi_0\in\cA_d(f)$ such that $f(i) = \pi_0(i)$.
    She uses the string $\alpha_0 := \alpha(\pi_0,x)$ from the first
    part of $\plr_1$'s message and the string $\beta_0 :=
    \beta(i,x,\alpha_0)$ from $\plr_2$'s message to output
    $\gamma(i,\pi_0,\alpha_0,\beta_0)$.
  \end{itemize}

  To verify correctness, we only need to check that $\plr_3$'s output in
  the ``otherwise'' case indeed equals $x_{f(i)}$. By the correctness of
  $P$, the output equals $x_{\pi_0(i)}$ and we are done, since $f(i) =
  \pi_0(i)$.

  We now turn to the communication cost of the protocol. By the
  guarantees in Lemma~\ref{lem:perm-cover}, $|\cA_d(f)| \le d$, so the
  first part of $\plr_1$'s message is at most $dm$ bits long, as is
  $\plr_2$'s message. Since there can be at most $n/d$ values $s\in[n]$
  such that $|f^{-1}(s)| > d$, the second part of $\plr_2$'s message is
  at most $n/d$ bits long. Therefore the communication cost is at most
  $2dm + n/d = O(dn\phi(n) + n/d)$. Setting $d =
  \ceil{1/\sqrt{\phi(n)}}$ gives us a bound of $O(n\sqrt{\phi(n)})$, as
  desired.
\end{proof}
\subsection{A $\mathbf{k}$-Player Protocol}

We now show how to prove Theorem~\ref{thm:mpj-ub} by generalizing the
protocol from Lemma~\ref{lem:mpj-three-ub} into a protocol for $k$
players. It will help to view an instance of $\mpj_k$ as incorporating
several ``embedded'' instances of $\mpj_3$. The following lemma makes
this precise.
\begin{lem} \label{lem:embed-three-into-k}
  Let $(i,f_2,\ldots,f_{k-1},x)$ be input for $\mpj_k$.  Then, for all
  $1<j<k$, \[\mpj_k(i,f_2,\ldots,x) = \mpj_3(f_{j-1}\circ\cdots\circ
  f_2(i),f_j,x\circ f_{k-1}\circ\cdots\circ f_{j+1}).\]
\end{lem}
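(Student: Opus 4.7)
The plan is to observe that this lemma is a pure regrouping of a composition chain, using associativity of function composition together with the closed-form expression for $\mpj_k$ already derived in Eq.~\eqref{eq:mpj-def}. So I would treat it as a short bookkeeping argument rather than an inductive or combinatorial one.

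Concretely, I would first recall that Eq.~\eqref{eq:mpj-def} gives
\[
  \mpj_k(i, f_2, \ldots, f_{k-1}, x) ~=~ x \circ f_{k-1} \circ \cdots \circ f_2(i) \, ,
\]
and that unrolling the $k=3$ case of the definition yields $\mpj_3(a, g, y) = y \circ g(a)$ for any $a \in [n]$, $g \in \nn$, and $y \in \b^n$. With these two identities in hand, the lemma reduces to verifying a single equality of compositions.

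Next, for any $j$ with $1 < j < k$, I would split the chain $f_{k-1} \circ \cdots \circ f_2$ into three blocks around the designated middle layer $f_j$: the tail $f_{j-1} \circ \cdots \circ f_2$ applied to $i$, then $f_j$, then the head $x \circ f_{k-1} \circ \cdots \circ f_{j+1}$. Using the notation introduced just before Eq.~\eqref{eq:mpj-def}, these three blocks are exactly $\hat{i}_j$, $f_j$, and $\hat{x}_j$. Plugging into the three-player formula gives
\[
  \mpj_3(\hat{i}_j, f_j, \hat{x}_j) ~=~ \hat{x}_j \circ f_j(\hat{i}_j)
  ~=~ (x \circ f_{k-1} \circ \cdots \circ f_{j+1}) \circ f_j \circ (f_{j-1} \circ \cdots \circ f_2)(i) \, ,
\]
which, by associativity, equals $x \circ f_{k-1} \circ \cdots \circ f_2(i) = \mpj_k(i, f_2, \ldots, f_{k-1}, x)$, as desired. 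A tiny amount of care is needed at the boundary cases $j = 2$ and $j = k-1$, where the tail or head composition is empty and should be interpreted as the identity on $[n]$ (so that $\hat{i}_2 = i$ and $\hat{x}_{k-1} = x$, consistent with the conventions set earlier); this is immediate from the definitions.

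There is no real obstacle here — the lemma is essentially the statement that function composition is associative, repackaged in the notation of the pointer-jumping problem. The value of stating it separately is purely expository: it is what will let the $k$-player protocol in the next subsection run the three-player protocol of Lemma~\ref{lem:mpj-three-ub} on an ``embedded'' instance sitting at any chosen middle layer $j$.
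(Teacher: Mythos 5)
Your proof is correct; the paper states this lemma without proof, evidently regarding it as an immediate consequence of the closed-form Eq.~\eqref{eq:mpj-def} and associativity of composition, which is precisely what you argue. Your handling of the boundary cases $j=2$ and $j=k-1$ (empty tail/head compositions) matches the paper's conventions $\hat{i}_2 = i$ and $\hat{x}_{k-1} = x$, so there is nothing missing.
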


In our protocol for $\mpj_k$, for $2\le j\le k-1$, the players
$\plr_1,\plr_j,$ and $\plr_k$ will use a modified version of the
protocol from Lemma~\ref{lem:mpj-three-ub} for $\mpj_3$ on input
$(f_{j-1}\circ\cdots\circ f_2(i), f_j, x\circ\cdots\circ f_{j+1})$.
Before we get to the protocol, we need to generalize the technical
definition and lemma from the previous subsection.

\begin{defn} \label{def:perm-s-cover}
  Let $S\ceq[n]$ and let $d$ be a positive integer.
  A set $\cA\ceq\cS_n$ of permutations is said to $(S,d)$-cover a function
  $f:[n]\to[n]$ if, for each $r\in S$, at least one of the following
  conditions holds:
  \begin{itemize}
    \item[(i)] $\exists\, \pi\in\cA$ such that $\pi(r) = f(r)$, or
    \item[(ii)] $|S \cap f^{-1}(f(r))| > d$.
  \end{itemize}
\end{defn}

\begin{lem} \label{lem:perm-s-cover}
  Let $f:[n]\to[n]$ be a function, $S \subseteq [n]$, and $d$ be a
  positive integer. There exists a set $\cA_{S,d}(f)\ceq\cS_n$, with
  $|\cA_{S,d}(f)| \le d$, that $(S,d)$-covers $f$.
\end{lem}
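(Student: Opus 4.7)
The plan is to mimic the construction from Lemma~\ref{lem:perm-cover}, restricting the partitioning step to $S$ and extending the resulting bijections arbitrarily on $[n]\setminus S$.

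First, I would let $\{s_1,\ldots,s_t\}$ be the image $f(S)$ and set $A_i := S \cap f^{-1}(s_i)$, so that $\{A_i\}_{i=1}^t$ is a partition of $S$. I would then pick pairwise disjoint sets $B_1,\ldots,B_t \ceq [n]$ satisfying $s_i \in B_i$ and $|B_i|=|A_i|$. This is always possible: the $s_i$'s are distinct, $\sum_i |A_i|=|S|\le n$, and once we place $s_i$ in $B_i$ we still have $n-t \ge |S|-t$ elements of $[n]\setminus\{s_1,\ldots,s_t\}$ left over to distribute as padding.

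Next, I would reuse the cyclic-shift construction from the original proof to define, for each $\ell\in[d]$, a bijection $\pi_{i,\ell}:A_i\to B_i$ by $\pi_{i,\ell}(a_{i,j}) := b_{i,(j-\ell)\bmod|B_i|}$, where $a_{i,1}<\cdots<a_{i,|A_i|}$ and $b_{i,1}<\cdots<b_{i,|B_i|}$ list $A_i$ and $B_i$ in ascending order. The same routine check as in Lemma~\ref{lem:perm-cover} yields the analogue of Eq.~\eqref{eq:perm-cover}: for each $r\in A_i$, $|\{\pi_{i,\ell}(r):\ell\in[d]\}|=\min\{d,|B_i|\}$.

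Finally, since $|S|=\sum_i|A_i|=\sum_i|B_i|=|B_1\cup\cdots\cup B_t|$, I can fix any bijection $\sigma:[n]\setminus S\to[n]\setminus\bigcup_i B_i$ and define $\pi_\ell:[n]\to[n]$ to act as $\pi_{i,\ell}$ on each $A_i$ and as $\sigma$ on $[n]\setminus S$; this gives $\pi_\ell\in\cS_n$. Setting $\cA_{S,d}(f):=\{\pi_1,\ldots,\pi_d\}$ immediately delivers $|\cA_{S,d}(f)|\le d$, and the $(S,d)$-covering verification is essentially verbatim the original one, with $|f^{-1}(f(r))|$ replaced by $|S\cap f^{-1}(f(r))|=|A_i|$: for $r\in A_i\ceq S$, either $|B_i|>d$, which forces condition~(ii), or $\{\pi_{i,\ell}(r):\ell\in[d]\}=B_i\ni s_i=f(r)$, which forces condition~(i). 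I do not anticipate a real obstacle here; the only ingredient beyond the original proof is the bijective extension $\sigma$, which is cost-free because the $(S,d)$-cover definition imposes no constraints on behaviour outside $S$.
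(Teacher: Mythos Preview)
Your proof is correct, but the construction differs from the paper's. The paper keeps the \emph{full} fibers $A_i = f^{-1}(s_i)$ and the same $\{B_i\}$ as in Lemma~\ref{lem:perm-cover} (so the $\pi_\ell$ are automatically permutations of all of $[n]$), and instead changes the \emph{orderings}: within each $A_i$ the elements of $A_i\cap S$ are listed first, and within each $B_i$ the element $s_i$ is placed last, at position $|B_i|$. Then for $r=a_{i,j}\in S$ one has $j\le |A_i\cap S|$, so whenever $|S\cap f^{-1}(f(r))|\le d$ one may take $\ell=j\in[d]$ and get $\pi_\ell(r)=b_{i,|B_i|}=s_i=f(r)$. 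Your route---restrict the partition to $S$, reuse verbatim the cyclic-shift and the ``$\{\pi_{i,\ell}(r):\ell\in[d]\}=B_i$'' conclusion of Lemma~\ref{lem:perm-cover}, then patch with an arbitrary bijection $\sigma$ off $S$---is arguably a cleaner reduction to the earlier lemma, at the price of the extra extension step; the paper's route avoids that extension but pays with a subtler ordering and a different verification argument.
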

\begin{proof}
  This proof closely follows that of Lemma~\ref{lem:perm-cover}.
  As before, we give an explicit algorithm to construct
  $\cA_{S,d}(f)$. Suppose $\Range(f) = \{s_1, s_2, \ldots, s_t\}$, and
  let $\{A_i\}$ and $\{B_i\}$ be defined as in
  Lemma~\ref{lem:perm-cover}.  Let $a_{i,1} < \cdots < a_{i,z}$ be the
  elements of $A_i \cap S$ arranged in ascending order, and let
  $a_{i,z+1} < \cdots < a_{i,|A_i|}$ be the elements of $A_i \setminus
  S$ arranged in ascending order.  Similarly, let $b_{i,1} < \cdots <
  b_{i,|B_i|-1}$ be the elements of $B_i\setminus \{s_i\}$ arranged in
  ascending order, and let $b_{i,|B_i|} = s_i$.  For $i \in [t],
  \ell\in[d]$, we define $\pi_{i,\ell}(a_{i,j}) ~:=~
  b_{i,(j-\ell)\bmod|B_i|}$.
%STACS We define
%  \[
%    \pi_{i,\ell}(a_{i,j}) ~:=~ b_{i,(j-\ell)\bmod|B_i|} \, , \quad
%    \mbox{for~} i\in[t], \ell\in[d] \, .
%  \]
  As before, it is routine to verify that $\pi_{i,\ell}$ is a bijection.
  Let $\pi_\ell:[n]\to[n]$ be the bijection given by taking the
  ``disjoint union'' of $\pi_{1,\ell}, \ldots, \pi_{t,\ell}$. We claim
  that $\cA_{S,d}(f) = \{\pi_1, \ldots, \pi_d\}$ satisfies the conditions of
  the lemma.

  It suffices to verify that this choice of $\cA_{S,d}(f)$
  $(S,d)$-covers $f$, i.e., to verify that every $r\in S$ satisfies at
  least one of the two conditions in
  Definition~\ref{def:perm-s-cover}.  Pick any $r\in S$.  Suppose
  $r\in A_i$, and fix $j$ such that $r = a_{i,j}$.  If $|S\cap
  f^{-1}(f(r))| > d$, then condition~(ii) holds.  Otherwise, setting
  $\ell = j < |S\cap f^{-1}(f(i))| \le d$, we conclude that
  $\pi_\ell(r) = \pi_{i,\ell}(r) = \pi_{i,\ell}(a_{i,j}) =
  b_{i,|B_i|} = s_i = f(r)$, so condition~(i) holds.
\end{proof}

\begin{proof}[Proof of Theorem~\ref{thm:mpj-ub}]
  To design a protocol for $\mpj_k$, we first let $\plr_1$ and
  $\plr_k$ agree on a parameter $d$, to be fixed below.  They also
  agree on a choice of $\mathcal{A}_{S,d}(f)$ for all $S \subseteq
  [n]$ and $f : [n] \rightarrow [n]$.
  Let $(i,f_2,\ldots,f_{k-1},x)$ denote an input for $\mpj_k$.  Also,
  let $S_1 = [n]$, and for all $2 \le j \le k-1$, let $S_j = \{s \in
  [n]: |S_{j-1}\cap f_j^{-1}(s)| > d\}$.  Our protocol works as
  follows:
  \begin{itemize}
    \item $\plr_1$ sends a $(k-1)$-part message.  For $1 \le j \le k-2$,
      the $j$th part of $\plr_1$'s message consists of the strings
      $\{\alpha(\pi,\hat{x}_{j+1})\}_\pi$ for each $\pi \in
      \mathcal{A}_{S_j,d} (f_{j+1})$.  
      % Part $k-2$ consists of the
      % strings $\{\alpha(\pi,x)\}_\pi$ for each $\pi \in
      % \mathcal{A}_{S_{k-2},d} (f_{k-1})$.  
      The remaining part consists
      of the bits $x_s$ for $s \in S_{k-1}$.  
    \item For $2\le j \le k-1$, $\plr_j$ sends the strings
      $\{\beta(\hat{i}_j, \hat{x}_j, \alpha)\}_\alpha$ for all strings
      $\alpha$ in the $(j-1)$th part of $\plr_1$'s message.  
    \item $\plr_k$ can now output $x_{\hat{i}_k}$ as follows.  If
    $|S_1 \cap f_2^{-1}(f_2(i))| \le d$, then, because 
    $\mathcal{A}_{S_1,d}(f_2)\ (S_1,d)$-covers $f_2$, there exists
    $\pi_0 \in \mathcal{A}_{S_1,d}(f_2)$ such that $f_2(i) =
    \pi_0(i)$.  She uses the string $\alpha_0 = \alpha(\pi_0,
    \hat{x}_2)$ from the first part of $\plr_1$'s message and the
    string $\beta_0 = \beta(i,\hat{x}_2, \alpha_0)$ from $\plr_2$'s
    message to output $\gamma_0 = \gamma(i,\pi_0,\alpha_0, \beta_0)$.
    Similarly, if there is a $j$ such that $2\le j\le k-2$ and $|S_j \cap
    f_{j+1}^{-1}(f_{j+1}(\hat{i}_{j+1}))| \le d$, then since
    $\mathcal{A}_{S_j,d}(f_{j+1})\ (S_j,d)$-covers $f_{j+1}$,
    there exists a $\pi_0 \in \mathcal{A}_{S_j,d}(f_{j+1})$ such that
    $f_{j+1}(\hat{i}_{j+1}) = \pi_0(\hat{i}_{j+1})$.  She uses the
    string $\alpha_0 = \alpha(\pi_0, \hat{x}_{j+1})$ from the $j$th
    part of $\plr_1$'s message and the string $\beta_0 =
    \beta(\hat{i}_{j+1},\hat{x}_{j+1}, \alpha_0)$ from $\plr_{j+1}$'s
    message to output $\gamma_0 = \gamma(\hat{i}_{j+1}, \pi_0,
    \alpha_0, \beta_0)$.  Otherwise, $|S_{k-2} \cap
    f_{k-1}^{-1}(f_{k-1}(\hat{i}_{k-1}))| > d$, hence $\hat{i}_{k} \in
    S_{k-1}$, and she reads $x_{\hat{i}_k}$ off from the last part of
    $\plr_1$'s message.
  \end{itemize}

  To verify correctness, we need to ensure that $\plr_k$ always
  outputs $x\circ f_{k-1} \circ \cdots \circ f_2(i)$.  In the following
  argument, we repeatedly use Lemma~\ref{lem:embed-three-into-k}.  We 
  proceed inductively.  If $|S_1 \cap f_2^{-1}(f_2(i))| \le d$ then there
  exists $\pi_0 \in \mathcal{A}_{S_1,d}(f_2)$ such that $f_2(i) =
  \pi_0(i)$, $\alpha_0 = \alpha(\pi_0, \hat{x}_2)$, and $\beta_0 =
  \beta(i, \hat{x}_2, \alpha_0)$, and $\plr_k$ outputs $\gamma_0 =
  \gamma(i, \pi_0, \alpha_0, \beta_0) = \hat{x}_2(\pi_0(i)) = x \circ
  f_{k-1} \circ \cdots \circ f_2(i)$.  Otherwise, $|S_1 \cap
  f_2^{-1}(f_2(i))| > d$, hence $f_2(i) \in S_2$.  Inductively, if
  $\hat{i}_j \in S_{j-1}$, then either $|S_{j-1}\cap
  f_j^{-1}(f_j(\hat{i}_j))| \le d$, or $|S_{j-1}\cap
  f_j^{-1}(f_j(\hat{i}_j))| > d$.  In the former case, there is $\pi_0
  \in \mathcal{A}_{S_{j-1},d}(f_j)$ such that $f_j(\hat{i}_j) =
  \pi_0(\hat{i}_j)$; $\alpha_0(\pi_0, \hat{x}_j)$, and $\beta_0 =
  \beta(\hat{i}_j, \hat{x}_j, \alpha_0)$, and $\plr_k$ outputs
  $\gamma_0 = \gamma(\hat{i}_j, \pi_0, \alpha_0, \beta_0) =
  \hat{x}_j(f_j(\hat{i}_j)) = x \circ f_{k-1} \circ \cdots \circ
  f_2(i)$.  In the latter case, $f_j(\hat{i}_j) \in S_j$.  By
  induction, we have that either $\plr_k$ outputs $x\circ f_{k-1}
  \circ \cdots \circ f_2(i)$, or $\hat{i}_k \in S_{k-1}$.  But in this
  case, $\plr_k$ outputs $x(\hat{i}_k) = x \circ f_{k-1} \circ \cdots \circ f_2(i)$
  directly from the last part of $\plr_1$'s message.  Therefore,
  $\plr_k$ always outputs $x\circ f_{k-1} \circ \cdots \circ f_2(i)$
  correctly.

  We now turn to the communication cost of the protocol.  By
  Lemma~\ref{lem:perm-s-cover}, $|\mathcal{A}_{S_j, d}(f_j)| \le d$
  for each $2 \le j \le k-1$, hence the first $k-2$ parts of
  $\plr_1$'s message each are at most $dm$ bits long, as is $\plr_j$'s
  message for all $2 \le j \le k-1$.  Also, since for all $2 \le j \le
  k-1$, there are at most $|S_{j-1}|/d$ elements $s \in S_j$ such that
  $|S_{j-1}\cap f_j^{-1}(s)| > d$, we must have that $|S_2| \le
  |S_1|/d = n/d, |S_3| \le |S_2|/d \leq n/d^2, $ etc., and $|S_{k-1}| \le
  n/d^{k-2}$.  Therefore, the final part of $\plr_1$'s message is at
  most $n/d^{k-2}$ bits long, and the total communication cost is at
  most $2(k-2)dm + n/d^{k-2} = O((k-2)dn\phi(n) + n/d^{k-2})$.
  Setting $d = \lceil 1/((k-2)\phi(n))^{1/(k-1)} \rceil$ gives us a bound of
  $O(n(k\phi(n))^{(k-2)/(k-1)})$ as desired.
\end{proof}

Note that, in the above protocol, except for the first and last players,
the remaining players access very limited information about their input.
Specifically, for all $2 \leq j \leq k-1$, $\plr_j$ needs to see only
$\hat{i}_j$ and $\hat{x}_j$, i.e., $\plr_j$ is both \emph{conservative}
and \emph{collapsing}.  Despite this severe restriction, we have a
sublinear protocol for $\mpj_k$.  As we shall see in the next section,
further restricting the input such that $\plr_1$ is also collapsing
yields very strong lower bounds.

% --------------------> SECTION <--------------------

\section{Collapsing Protocols: A Lower Bound} \label{sec:collapsing-lb}

Let $F:\scrA_1\times\scrA_2\times\cdots\times\scrA_k\to\scrB$ be a
$k$-player NOF communication problem and $P$ be a protocol for $F$. We
say that $\plr_j$ is {\em collapsing} in $P$ if her message depends only
on $x_1,\ldots,x_{j-1}$ and the function
$g_{x,j}:\scrA_1\times\scrA_2\times\cdots\times\scrA_j\rightarrow\scrB$
given by $g_{x,j}(z_1,\ldots,z_j) =
F(z_1,\ldots,z_j,x_{j+1},\ldots,x_k)$.  For pointer jumping, this
amounts to saying that $\plr_j$ sees all layers $1,\ldots,{j-1}$ of
edges (i.e., the layers preceding the one on her forehead), but not
layers $j+1,\ldots,k$; however, she does see the result of following the
pointers from each vertex in layer $j$. Still more precisely, if the
input to $\mpj_k$ (or $\mpjh_k$) is $(i,f_2,\ldots,f_k)$, then the only
information $\plr_j$ gets is $i, f_2, \ldots, f_{j-1}$ and the
composition $f_k \circ f_{k-1} \circ \cdots \circ f_{j+1}$. 
%% Note that, in the case of $\mpj_k$, this composition is a string in
%% $\b^n$ and can be defined by switching between strings in $\b^n$ and
%% functions from $[n]$ to $\b$.

We say that a protocol is collapsing if every player involved is
collapsing. We shall prove Theorem~\ref{thm:collapsing-lb} by
contradiction. Assume that there is a collapsing protocol $P$ for
$\mpj_k$ in which every player sends less than $n - \frac12\log n - 2$
bits. We shall construct a pair of inputs that differ only in the last
layer (i.e., the Boolean string on $\plr_k$'s forehead) and that cause
players $1$ through $k-1$ to send the exact same sequence of messages.
This will cause $\plr_k$ to give the same output for both these inputs.
But our construction will ensure that the desired outputs are unequal, a
contradiction.  To aid our construction, we need some definitions and
preliminary lemmas.

\begin{defn} \label{def:consistent}
  A string $x\in\b^n$ is said to be {\em consistent} with
  $(f_1,\ldots,f_j,\alpha_1,\ldots,\alpha_j)$ if, in protocol $P$, for
  all $h \leq j$, $\plr_h$ sends the message $\alpha_h$ on seeing input
  $(i=f_1,\ldots,f_{h-1}, x\circ f_j \circ f_{j-1} \circ \cdots \circ
  f_{h+1})$ and previous messages
  $\alpha_1,\ldots,\alpha_{h-1}$.\footnote{It is worth noting that, in
  Definition~\ref{def:consistent}, $x$ is not to be thought of as an
  input on $\plr_k$'s forehead. Instead, in general, it is the
  composition of the rightmost $k-j$ layers of the input graph.} A
  subset $T\ceq\b^n$ is said to be consistent with
  $(f_1,\ldots,f_j,\alpha_1,\ldots,\alpha_j)$ if $x$ is consistent with
  $(f_1,\ldots,f_j,\alpha_1,\ldots,\alpha_j)$ for all $x \in T$.
\end{defn}

\begin{defn}
  For strings $x,x'\in\b^n$ and $a,b\in\b$, define the sets
  \[
    I_{ab}(x,x') ~:=~ \{j\in[n]:\, (x_j,x'_j) = (a,b)\} \, .
  \]
  A pair of strings $(x,x')$ is said to be a {\em crossing pair} if for
  all $a,b\in\b$, $I_{ab}(x,x') \ne \emptyset$.  A set $T\ceq\b^n$ is
  said to be {\em crossed} if it contains a crossing pair and {\em
  uncrossed} otherwise. The {\em weight} of a string $x\in\b^n$ is
  defined to be the number of $1$s in $x$, and denoted $|x|$.
\end{defn}

For the rest of this section, we assume (without loss of generality)
that $n$ is large enough and even.

\begin{lem} \label{lem:uncrossed-half-weight}
  If $T\ceq\b^n$ is uncrossed, then $|\{x\in T:\, |x| = n/2\}| \le 2$.
\end{lem}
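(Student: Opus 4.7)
The plan is to show that among the weight-$n/2$ strings in $T$, any two of them must be either equal or bitwise complementary, so at most two can coexist (namely some $x$ and $\bar x$).

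The key calculation is a counting argument on the four sets $I_{ab}(x, x')$. For any $x, x' \in \b^n$ with $|x| = |x'| = n/2$, we have $|I_{10}| + |I_{11}| = |x| = n/2$ and $|I_{01}| + |I_{11}| = |x'| = n/2$, which immediately yields $|I_{01}| = |I_{10}|$. Combining with $|I_{00}| + |I_{01}| + |I_{10}| + |I_{11}| = n$, this also forces $|I_{00}| = |I_{11}|$. Thus the four cell sizes are pinned down by the two symmetry identities $|I_{00}| = |I_{11}|$ and $|I_{01}| = |I_{10}|$.

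For the main step, suppose $(x, x')$ is not a crossing pair, so some $I_{ab}$ is empty. By the symmetries just established, if $I_{00} = \emptyset$ then $I_{11} = \emptyset$ as well, which means $x_i \neq x'_i$ at every coordinate, i.e., $x' = \bar x$; the case $I_{11} = \emptyset$ is identical. If instead $I_{01} = \emptyset$ then $I_{10} = \emptyset$, so $x_i = x'_i$ at every coordinate, giving $x = x'$; the case $I_{10} = \emptyset$ is identical. Hence every pair of weight-$n/2$ strings in the uncrossed set $T$ lies in a single pair $\{x, \bar x\}$, which has size at most $2$, proving the bound.

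There is no real obstacle to this argument; the only mild subtlety is noticing the weight-driven symmetry $|I_{00}| = |I_{11}|$, since this is what upgrades the weak hypothesis ``some single $I_{ab}$ is empty'' into the strong conclusion that two complementary cells are simultaneously empty, and hence that $x$ and $x'$ are related in one of only two possible ways.
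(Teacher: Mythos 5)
Your proposal is correct and follows essentially the same approach as the paper: the same counting identities $|I_{01}| = |I_{10}|$ and $|I_{00}| = |I_{11}|$, followed by the observation that non-crossing forces the complementary pair of cells to be empty. The only cosmetic difference is that the paper first assumes $x \ne x'$ (ruling out the $I_{01} = I_{10} = \emptyset$ branch immediately), whereas you handle that branch explicitly and conclude ``equal or complementary''; both routes give the same bound of $2$.
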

\begin{proof}
  Let $x$ and $x'$ be distinct elements of $T$ with $|x| = |x'| = n/2$.
  For $a,b\in\b$, define $t_{ab} = |I_{ab}(x,x')|$.  Since $x \ne x'$,
  we must have $t_{01} + t_{10} > 0$. An easy counting argument shows
  that $t_{01} = t_{10}$ and $t_{00} = t_{11}$. Since $T$ is uncrossed,
  $(x,x')$ is not a crossing pair, so at least one of the numbers
  $t_{ab}$ must be zero. It follows that $t_{00} = t_{11} = 0$, so $x$
  and $x'$ are bitwise complements of each other. 
  Since this holds for any two strings in $\{x\in T:\, |x| = n/2\}$,
  that set can have size at most $2$.
\end{proof}

\begin{lem} \label{lem:crossed-part}
  Suppose $t \le n - \frac12\log n - 2$. If $\b^n$ is partitioned into
  $2^t$ disjoint sets, then one of those sets must be crossed.
\end{lem}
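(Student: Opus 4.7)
The plan is to argue by contradiction: assume the partition $\{0,1\}^n = T_1 \sqcup T_2 \sqcup \cdots \sqcup T_{2^t}$ has every $T_r$ uncrossed, and count the weight-$n/2$ strings in two ways to derive a contradiction.

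First I would apply Lemma~\ref{lem:uncrossed-half-weight} to each part: since each $T_r$ is uncrossed, it contains at most $2$ strings of weight exactly $n/2$. Summing over the $2^t$ parts, the total number of weight-$n/2$ strings in $\{0,1\}^n$ is at most $2 \cdot 2^t = 2^{t+1}$. But the true count is $\binom{n}{n/2}$, so it suffices to show $\binom{n}{n/2} > 2^{t+1}$ whenever $t \le n - \frac12\log n - 2$.

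Next I would invoke a standard central binomial coefficient bound. For even $n \ge 2$ one has $\binom{n}{n/2} \ge 2^n/\sqrt{2n}$ (this follows, e.g., from the inequality $\binom{2m}{m} \ge 4^m/(2\sqrt{m})$, which can be checked by induction or via Stirling). On the other hand, the hypothesis $t \le n - \frac12\log n - 2$ yields
\[
  2^{t+1} ~\le~ 2^{\,n - \frac12\log n - 1} ~=~ \frac{2^n}{2\sqrt{n}} \, .
\]
Since $\sqrt{2n} < 2\sqrt{n}$, we get $\binom{n}{n/2} \ge 2^n/\sqrt{2n} > 2^n/(2\sqrt n) \ge 2^{t+1}$, which contradicts the earlier counting upper bound. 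Hence some $T_r$ must be crossed.

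I do not anticipate a real obstacle here; the only mildly delicate step is choosing a sufficiently tight lower bound on $\binom{n}{n/2}$ so that the constants line up with the ``$-2$'' in the hypothesis. A crude bound like $\binom{n}{n/2} \ge 2^n/(n+1)$ would force a worse constant and hence a weaker bound on $t$, so using $2^n/\sqrt{2n}$ (or equivalently the $4^m/(2\sqrt m)$ form) is what makes the claim go through cleanly.
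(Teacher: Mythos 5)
Your argument is correct and is essentially the same as the paper's: both count the weight-$n/2$ strings, apply Lemma~\ref{lem:uncrossed-half-weight} to bound each uncrossed part's contribution by $2$, and then invoke a central-binomial lower bound of the form $\binom{n}{n/2} > 2^n/(2\sqrt{n})$ to force the number of parts above $2^{n-\frac12\log n - 2}$. The only cosmetic difference is that you state the Stirling-type estimate as $\binom{n}{n/2} \ge 2^n/\sqrt{2n}$ and phrase the conclusion as an explicit contradiction rather than a lower bound on the number of parts.
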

\begin{proof}
  Let $\b^n = T_1 \sqcup T_2 \sqcup \cdots \sqcup T_m$ be a partition of
  $\b^n$ into $m$ uncrossed sets. Define $X := \{x\in\b^n:\, |x| =
  n/2\}$.  Then $X = \bigcup_{i=1}^m (T_i \cap X)$. By
  Lemma~\ref{lem:uncrossed-half-weight},
  \[
    |X| ~\le~ \sum_{i=1}^m |T_i \cap X| ~\le~ 2m \, .
  \]
  Using Stirling's approximation, we can bound $|X| > 2^n/(2\sqrt n)$.
  Therefore, $m > 2^{n - \frac12\log n - 2}$.
\end{proof}

\begin{proof}[Proof of Theorem~\ref{thm:collapsing-lb}]
  Set $t = n - \frac12\log n - 2$. Recall that we have assumed that
  there is a collapsing protocol $P$ for $\mpj_k$ in which every player
  sends at most $t$ bits. We shall prove the following statement by
  induction on $j$, for $j\in[k-1]$.
  \begin{quote} \begin{itemize}
    \item[(*)] There exists a partial input
    $(i=f_1,f_2,\ldots,f_j)\in[n]\times\left(\nn\right)^{j-1}$, a
    sequence of messages $(\alpha_1, \ldots, \alpha_j)$ and a crossing
    pair of strings $(x,x')\in\left(\b^n\right)^2$ such that both $x$
    and $x'$ are consistent with $(f_1,\ldots,f_j,
    \alpha_1,\ldots,\alpha_j)$, whereas $x\circ f_j\circ\cdots\circ
    f_2(i) = 0$ and $x'\circ f_j\circ\cdots\circ f_2(i) = 1$.  
  \end{itemize} \end{quote}
  Considering~(*) for $j = k-1$, we see that $\plr_k$ must behave
  identically on 
  the two %% STACS only
  inputs $(i,f_2,\ldots,f_{k-1},x)$ and
  $(i,f_2,\ldots,f_{k-1},x')$. Therefore, she must err on one of these
  two inputs. This will give us the desired contradiction.

  To prove~(*) for $j=1$, note that $\plr_1$'s message, being at most
  $t$ bits long, partitions $\b^n$ into at most $2^t$ disjoint sets. By
  Lemma~\ref{lem:crossed-part}, one of these sets, say $T$, must be
  crossed. Let $(x,x')$ be a crossing pair in $T$ and let $\alpha_1$ be
  the message that $\plr_1$ sends on seeing a string in $T$. Fix $i =
  f_1$ such that $i\in I_{01}(x,x')$. These choices are easily seen to
  satisfy the conditions in~(*).
  Now, suppose~(*) holds for a particular $j\ge 1$. Fix the partial
  input $(f_1,\ldots,f_j)$ and the message sequence
  $(\alpha_1,\ldots,\alpha_j)$ as given by~(*). We shall come up with
  appropriate choices for $f_{j+1}$, $\alpha_{j+1}$ and a new crossing
  pair $(y,y')$ to replace $(x,x')$, so that~(*) is satisfied for $j+1$.
  Since $\plr_{j+1}$ sends at most $t$ bits, she partitions $\b^n$ into
  at most $2^t$ subsets (the partition might depend on the choice of
  $(f_1,\ldots,f_j,\alpha_1,\ldots,\alpha_j)$).

  As above, by Lemma~\ref{lem:crossed-part}, she sends a message
  $\alpha_{j+1}$ on some crossing pair $(y,y')$. Choose $f_{j+1}$ so
  that it maps $I_{ab}(x,x')$ to $I_{ab}(y,y')$ for all $a,b\in\b$; this
  is possible because $I_{ab}(y,y') \ne \emptyset$.  Then, for all
  $i\in[n]$, $x_i = y_{f_{j+1}(i)}$ and $x'_i = y'_{f_{j+1}(i)}$. Hence,
  $x = y\circ f_{j+1}$ and $x' = y'\circ f_{j+1}$. Applying the
  inductive hypothesis and the definition of consistency, it is
  straightforward to verify the conditions of~(*) with these choices for
  $f_{j+1}, \alpha_{j+1}, y$ and $y'$. This completes the proof.
\end{proof}

% --------------------> SECTION <--------------------

\vskip-0.3cm
\section{Collapsing Protocols: An Upper Bound} \label{sec:collapsing-ub}

We now turn to proving Theorem~\ref{thm:collapsing-ub} by constructing
%appropriate collapsing protocols for $\mpjhperm_k$. Our protocols use
an appropriate collapsing protocol for $\mpjhperm_k$. Our protocol uses
what we call {\em bucketing schemes}, which have the flavor of the
conservative protocol of Damm et al.~\cite{DammJS98}.  For any
function $f \in [n]^{[n]}$ and any $S \subseteq [n]$, let $\ind{S}$
denote the indicator function for $S$; that is, $\ind{S}(i) = 1
\Leftrightarrow i \in S$.  Also, let $f|_S$ denote the function $f$
restricted to $S$; this can be seen as a list of numbers $\{i_s\}$,
one for each $s\in S$.  Players will often need to send $\ind{S}$ and
$f|_S$ together in a single message.  This is because later players
might not know $S$, and will therefore be unable to interpret $f|_S$
without $\ind{S}$.  Let $\ang{m_1,\ldots, m_t}$ denote the
concatenation of messages $m_1,\ldots, m_t$.  %As before, it will be
%instructive to first consider the special case $k = 3$ in detail.

\begin{defn}
  A {\em bucketing scheme} on a set $X$ is an ordered partition $\cB =
  (B_1,\ldots, B_t)$ of $X$ into {\em buckets}. For $x\in X$, we write
  $\cB[x]$ to denote the unique integer $j$ such that $B_j \ni x$.
\end{defn}

We actually prove our upper bound for problems slightly more general
than $\mpjhperm_k$. To be precise, for an instance $(i,f_2,\ldots,f_k)$
of $\mpjh_k$, we allow any one of $f_2, \ldots, f_k$ to be an arbitrary
function in $\nn$. The rest of the $f_j$s are required to be
permutations, i.e., in $\cS_n$. 

\begin{thm}[Slight generalization of Theorem~\ref{thm:collapsing-ub}]
\label{thm:one-layer-perm}
  There is an $O(n\log^{(k-1)}n)$ collapsing protocol for instance
  $(i,f_2,\ldots,f_k)$ of $\mpjh_k$ when all but one of
  $f_2,\ldots,f_k$ are permutations. In particular, there is such a
  protocol for $\mpjhperm_k$.
\end{thm}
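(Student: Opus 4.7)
The plan is to build a collapsing protocol driven by a chain of bucketing schemes. I maintain the invariant that after $\plr_j$'s message, every player knows a set $W_j \ceq [n]$ in the ``layer $k$'' output space that contains the true output $A := f_k \circ \cdots \circ f_2(i)$; starting from $W_0 = [n]$ and arranging $|W_{k-1}| = 1$ lets $\plr_k$ simply output the unique element of $W_{k-1}$.

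Consider first the all-permutation case. Player $\plr_1$, knowing $g_1 := f_k \circ \cdots \circ f_2$, picks a bucketing scheme $\cB^{(1)}$ of $[n]$ with $t_1$ buckets and sends $\ang{\cB^{(1)}, h_1}$ with $h_1(r) := \cB^{(1)}[g_1(r)]$, so that any later player reads off $W_1 := B^{(1)}_{h_1(i)}$. For $j \ge 2$, $\plr_j$ sees $g_j := f_k \circ \cdots \circ f_{j+1}$ and knows $\hat{i}_j$; she picks a sub-bucketing $\cB^{(j)}$ of $W_{j-1}$ into $t_j$ pieces, sets $U_j := g_j^{-1}(W_{j-1})$, and sends $\ang{\ind{U_j}, \cB^{(j)}, h_j|_{U_j}}$ with $h_j(v) := \cB^{(j)}[g_j(v)]$. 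Since $g_j$ is a permutation, $|U_j| = |W_{j-1}|$; and since $g_j(\hat{i}_{j+1}) = A \in W_{j-1}$, we have $\hat{i}_{j+1} \in U_j$, so $\plr_{j+1}$ (who knows $\hat{i}_{j+1}$) looks up $h_j(\hat{i}_{j+1})$ to recover $W_j$. Every ingredient of $\plr_j$'s message is computed from $i, f_2, \ldots, f_{j-1}$ and $g_j$, which makes $\plr_j$ collapsing. The cost is $O(n + |W_{j-1}| \log t_j)$; setting $t_j = 2^{a_j}$ with $a_1 = \log^{(k-1)} n$ and $a_j = 2^{s_{j-1}} \log^{(k-1)} n$ (where $s_j := \sum_{i \le j} a_i$) makes each message $\Theta(n \log^{(k-1)} n)$ bits while $s_j$ climbs the tower $\log^{(k-1)} n, \log^{(k-2)} n, \ldots$ until $s_{k-1} \ge \log n$, ensuring $\prod_j t_j \ge n$ and $|W_{k-1}| = 1$.

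For the generalization in Theorem~\ref{thm:one-layer-perm} where one $f_\ell$ is arbitrary, only the players $\plr_j$ with $2 \le j < \ell$ are problematic, because then $g_j = \pi_R \circ f_\ell \circ \pi_L^{(j)}$ (with permutations $\pi_R, \pi_L^{(j)}$ built from the surrounding $f_i$'s) is non-injective and $|U_j|$ can grow to $n$. The key observation is that the collision profile $w(s) := |g_j^{-1}(s)| = |f_\ell^{-1}(\pi_R^{-1}(s))|$ is $j$-independent and computable by every relevant player from the composition she sees. I would accordingly choose each $\cB^{(j)}$ to be balanced in the $w$-weighted measure rather than the counting measure: since $\sum_s w(s) = n$, layer $k$ can be partitioned into $t_j$ buckets of $w$-weight $O(n/\prod_{i \le j} t_i)$, isolating any ``heavy'' $s$ (with $w(s)$ above the target weight) into its own singleton bucket -- harmless, since such a bucket already pins $A$ to one element. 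Then $|U_j| \le w(W_{j-1})$ obeys the same shrinkage relation as in the all-permutation case, and the same parameter choice yields the $O(n \log^{(k-1)} n)$ bound. The main obstacle I expect is verifying that these weighted sub-bucketings are uniformly reconstructible by all collapsing players -- a point that reduces to the observation above, together with a canonical tie-breaking rule so that every receiver recomputes the same partition of $W_{j-1}$ purely from $W_{j-1}$ and the function $w$.
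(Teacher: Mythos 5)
Your protocol for the all-permutation case $\mpjhperm_k$ is essentially the paper's: a chain of bucketing schemes in which each player narrows the bucket containing the final answer and announces an indicator of its preimage under her collapsed composition, plus refined bucket labels for that preimage. Two small differences: the paper uses the fixed, input-independent scheme $\cB_t$ with $B_j = \{r : \ceil{2^t r/n} = j\}$, so no scheme is ever transmitted, whereas you ship $\cB^{(j)}$ explicitly; and the paper's parameters $b_j = \ceil{\log^{(k-j)} n}$ make $\plr_1$ cost $n\log^{(k-1)} n$ bits and every subsequent player only $O(n)$, whereas your $a_j = 2^{s_{j-1}}\log^{(k-1)}n$ makes every player cost $\Theta(n\log^{(k-1)}n)$, overshooting to $\Theta(kn\log^{(k-1)}n)$. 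Adopt the paper's allocation to land on the stated bound.

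The larger issue is the ``one arbitrary layer'' generalization, which the paper does not actually prove (its proof opens with ``We prove this for $\mpjhperm_k$ only''), so your attempt here is genuine extra work. The $w$-weighted bucketing is a sensible idea, but there is both a misdiagnosis and a real gap. The misdiagnosis: your flagged ``main obstacle'' of reconstructibility is moot because your message format already carries $\cB^{(j)}$ explicitly --- and, in fact, reconstructibility cannot be rescued by any canonical tie-breaking, since $\plr_\ell$'s collapsed view is the permutation $g_\ell = f_k\circ\cdots\circ f_{\ell+1}$, which carries no information about $w(s) = |f_\ell^{-1}(g_\ell^{-1}(s))|$ (she does not see $f_\ell$). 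The real gap is the cost of transmitting $\cB^{(j)}$: it is a labeling of $W_{j-1}$, roughly $|W_{j-1}|\log t_j$ bits, and balancing only the $w$-weight puts no bound on $|W_{j-1}|$ --- a bucket of small $w$-weight can contain arbitrarily many zero-weight elements of $[n]\setminus\Range(g_j)$, driving $|W_{j-1}|$ toward $n$ and the message far past $n\log^{(k-1)}n$. To close this you must either balance cardinality simultaneously with $w$-weight, or restrict each $W_j$ to the common range $\Range(g_j)$ (on which every point has $w \ge 1$, so $|W_j| \le w(W_j)$), at the price of an extra $n$-bit support indicator per player so that $\plr_\ell$, who cannot infer that range, can still read off $W_{\ell-1}$. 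Either fix is routine, but as written the cost analysis of the generalization does not go through.
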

\begin{proof}
  We prove this for $\mpjhperm_k$ only.
  For $1 \le t \le \ceil{\log n}$, 
  define the bucketing scheme $\cB_t = (B_1,\ldots,
  B_{2^t})$ on $[n]$ by $B_j := \{r \in [n]: \lceil 2^t r/n \rceil = j\}$.
  Note that each $|B_j| \leq \lceil n/2^t \rceil$ and that a bucket can
  be described using $t$ bits.  For $1 \leq j \leq k$, let $b_j =
  \ceil{\log^{(k-j)} n}$.  In the protocol, most players will use two
  bucketing schemes, $\cB$ and $\cB'$. On input $(i,f_2,\ldots,f_k)$:
  \begin{itemize}
    \item $\plr_1$
      sees $\hat{f}_1$, computes $\cB' := \cB_{b_1}$, and sends 
      $\ang{\cB'[\hat{f}_1(1)], \ldots, 
	\cB'[ \hat{f}_1(n)]}$.
    \item $\plr_2$ sees $\hat{i}_2, \hat{f}_2$, and $\plr_1$'s
      message.  $\plr_2$ computes $\cB := \cB_{b_1}$ and $\cB' :=
      \cB_{b_2}$.  She recovers $b := \cB[\hat{f}_2(f_2(\hat{i}_2))]$
      and hence $B_b$.  Let $S_2 := \{s \in [n]: \hat{f}_2(s) \in
      B_b\}$.  Note that $f_2(\hat{i}_2) \in S_2$.  $\plr_2$ sends
      $\ang{\ind{S_2}, \{\cB^\prime[ \hat{f}_2(s)]:s \in S_2\}}$.

    \centerline{\vdots}

    \item $\plr_j$ sees $\hat{i}_j, \hat{f}_j$, and $\plr_{j-1}$'s
      message.  $\plr_j$ computes $\cB := \cB_{b_{j-1}}$ and $\cB' :=
      \cB_{b_j}$.  She recovers $b := \cB[\hat{f}_j(f_j(\hat{i}_j))]$
      and hence $B_b$.  Let $S_j := \{s \in [n]: \hat{f}_j(s) \in
      B_b\}$.  Note that the definitions guarantee that
      $f_j(\hat{i}_j) \in S_j$.  $\plr_j$ sends $\ang{\ind{S_j},
      \{\cB^\prime[ \hat{f}_j(s)]:s \in S_j\}}$.

    \centerline{\vdots}

    \item $\plr_k$ sees $\hat{i}_k$ and $\plr_{k-1}$'s message and
    outputs $f_k(\hat{i}_k)$.
  \end{itemize}

  We claim that this protocol costs $O(n \log^{(k-1)} n)$ and
  correctly outputs $\mpjh_k(i,f_2,\ldots,f_k)$.  For each $2 \leq j
  \leq k-1$, $\plr_j$ uses bucketing scheme $\cB_{b_{j-1}}$ to recover
  the bucket $B_b$ containing $\hat{f}_j(f_j(\hat{i}_j))$.  She then
  encodes each element in $B_b$ in the bucketing scheme $\cB_{b_j}$.
  Each bucket in $\cB_{b_j}$ has size at most $\lceil
  n/b_{j+1}\rceil$.  In particular, each bucket in scheme $\cB_{k-1}$
  has size at most $\lceil n/b_k \rceil = 1$, and the unique element in 
  the bucket (if present) is precisely $f_k(\hat{i}_k)$.
  Turning to the communication cost, $\plr_1$ sends $b_1 =
  \ceil{\log^{(k-1)} n}$ bits to identify the bucket for each $i \in
  [n]$, giving a total of $n\ceil{\log^{(k-1)}n}$ bits.  For $1 < j <
  k$, $\plr_j$ uses $n + b_j(n/b_j) = O(n)$ bits.  Thus, the total cost is
  $O(n\log^{(k-1)} n + kn)$ bits.

  For $k \le \log^* n$ players, we are done. For larger $k$, we can get
  an $O(n)$ protocol by doubling the size of each $b_j$ and stopping the
  protocol when the buckets have size $\leq 1$.
\end{proof}

% --------------------> SECTION <--------------------

\vskip-0.3cm
\section{Concluding Remarks}
We have presented the first nontrivial upper bound on the NOF
communication complexity of the Boolean problem $\mpj_k$, showing that
$C(\mpj_k) = o(n)$. A lower bound of $\Omega(n)$ had seemed {\em a
priori} reasonable, but we show that this is not the case. One
plausible line of attack on lower bounds for $\mpj_k$ is to treat it
as a {\em direct sum} problem: at each player's turn, it seems that
$n$ different paths need to be followed in the input graph, so it
seems that an information theoretic approach (as in Bar-Yossef et
al.~\cite{BarYossefJKS02} or Chakrabarti~\cite{Chakrabarti07}) could
lower bound $C(\mpj_k)$ by $n$ times the complexity of some simpler
problem. However, it appears that such an approach would naturally
yield a lower bound of the form $\Omega(n/\xi(k))$, as in
Conjecture~\ref{conj:mpj-lin-lb}, which we have explicitly falsified.

The most outstanding open problem regarding $\mpj_k$ is to resolve
Conjecture~\ref{conj:mpj-accz}. A less ambitious, but seemingly
difficult, goal is to get tight bounds on $C(\mpj_3)$, closing the gap
between our $O(n\sqrt{\log\log n/\log n})$ upper bound and Wigderson's
$\Omega(\sqrt n)$ lower bound.  A still less ambitious question is prove
that $\mpj_3$ is harder than its very special subproblem $\tpj_3$
(defined in Section~\ref{sec:prev}).  Our $n - O(\log n)$ lower bound
for collapsing protocols is a step in the direction of improving the
known lower bounds. We hope our technique provides some insight about
the more general problem.

%%%%%%%%%%%%%%%%%%%%%%%%%%%%%%%%%%%%%%%%%%%%%%%%%%%%%%%%%%%%%%%%%%%%%%%%%%%%%%

% {\small
  \bibliographystyle{alpha}

\vskip-0.3cm

  %% \bibliography{../super}
% }

\end{document}